\documentclass[11pt,a4paper]{article}

\usepackage[T1]{fontenc}
\usepackage[utf8]{inputenc}
\usepackage{lmodern}
\usepackage{microtype}
\usepackage[margin=2.5cm,marginparwidth=2cm]{geometry}

\usepackage{stmaryrd}
\usepackage{amsfonts}
\usepackage{amsmath}
\usepackage{amssymb}
\usepackage{amsthm}
\usepackage{mathtools}
\usepackage{xcolor}
\usepackage{graphicx}
\usepackage{bm}
\usepackage{tikz}
\usepackage{float}
\usepackage{caption}
\usepackage{subfig}
\usepackage{tabularx}
\usepackage[disable]{todonotes}
\usepackage{xspace}           
\usepackage{authblk}          
\usepackage[backend=biber,maxbibnames=10,doi=false,url=false,isbn=false]{biblatex}
\usepackage[inline,shortlabels]{enumitem}
\usepackage[ruled,linesnumbered,vlined]{algorithm2e}
\usepackage{adjustbox}
\usepackage[colorlinks=true,linkcolor=blue!60!black,citecolor=green!40!black,
            urlcolor=blue!60!black]{hyperref}
\usepackage{thm-restate}

\theoremstyle{plain}
\newtheorem{thm}{Theorem}[section]

\newenvironment{introthm}[1]
  {\par\medskip\noindent\textbf{Theorem~#1.}\itshape}
  {\par\medskip}

\newtheorem{lem}[thm]{Lemma}
\newtheorem{prop}[thm]{Proposition}
\theoremstyle{definition}
\newtheorem{defn}[thm]{Definition}
\newtheorem{rem}[thm]{Remark}
\newtheorem{example}[thm]{Example}

\newcommand{\sage}{\textsc{SageMath}\xspace}
\newcommand{\R}[0]{\mathbb{R}}
\newcommand{\F}{\mathbb{F}}

\newcommand{\Z}[0]{\mathbb{Z}}
\newcommand{\Q}[0]{\mathbb{Q}}

\newcommand{\NPV}{N_{P(\F_p^n)}}

\DeclareMathOperator{\card}{card}
\DeclareMathOperator*{\argmin}{arg\,min}
\newcommand{\vp}{v_p}
\newcommand{\gr}{\mathrm{gr}}
\newcommand{\Vect}{\mathrm{Vect}}

\newcommand{\val}{\textnormal{val}}

\title{On the Complexity of Computing with \ensuremath{p}-adic Lattices}

\author[1]{Pierre Fromonot}
\author[2]{Mercedes Haiech}
\author[2]{Tristan Vaccon}
\affil[1]{Universit\'e de Limoges, 87060 Limoges, France}
\affil[2]{Universit\'e de Limoges, CNRS, XLIM UMR 7252, 87060 Limoges, France}
\affil[ ]{\small\texttt{pierre.fromonot@etu.unilim.fr},
  \texttt{mercedes.haiech@unilim.fr}, \texttt{tristan.vaccon@unilim.fr}}

\date{}

\begin{document}

\maketitle

\begin{abstract}
We study the complexity of two fundamental problems attached to
$p$-adic lattices: computing an $N$-orthogonal basis, and deciding
whether a given basis is $N$-orthogonal.
When the norm is supplied through an explicit diagonal
representation, both problems are polynomial. However, in the black-box model, where
the ultrametric norm $N$ is fixed and only accessible through an evaluation oracle, we prove that both problems require $\Omega(p^{\nu-1})$ queries in the worst case, where $\nu \leq n$ is the largest multiplicity of a commensurability class of $N$ and $n$ is the dimension of the ambient vector space.
We then analyse the optimality of these bounds by providing
explicit algorithms.

Our results answer formally an open question of Zhang, Deng and Wang
\cite{zhang2025notes} and refine the recent results of Duan and Wu \cite{duan2026norm}  by providing pointwise query bounds depending on the norm's structure, in addition to uniform bounds.
\end{abstract}

\medskip
\noindent\textbf{Keywords.} $p$-adic lattices, ultrametric norms,
orthogonalisation, query complexity, LVP, CVP, post-quantum cryptography.

\widowpenalty = 10000

\section{Introduction}

\subsection{History and related results}

In 2018, Deng, Luo and Xiao~\cite{deng2018some} introduced the Longest Vector Problem (LVP) and the Closest Vector Problem (CVP) for $\Z_p$-lattices  in finite extensions $K/\Q_p$, equipped with the canonical absolute value extending $|\cdot|_p$,
motivated by the Euclidean lattice problems and their role in post quantum cryptography (e.g.\ Kyber, Dilithium) ~\cite{NIST:PQCrypto}.
In such spaces, the ultrametric norm gives a radically different geometry from that of their Euclidean counterparts and opens the way
to new cryptographic hardness assumptions.

Zhang, Deng and Wang ~\cite{zhang2025norm} extended this framework to finite-dimensional
$\Q_p$-vector spaces equipped with arbitrary ultrametric norms $N$. They proved that
LVP and CVP are solvable in polynomial time as soon as
one is given an $N$-orthogonal basis, a structure whose existence is
always guaranteed in ultrametric geometry~\cite{weil1967basic}, and
that CVP and the computation of such a basis are polynomially
equivalent.

More recently, Zhang et al.~\cite{zhang2025notes} showed that the usual
norm $\vert \cdot \vert_p$ on a field extension $K/\Q_p$ reveals a
uniformiser, from which an orthogonal basis can be built efficiently:
this made LVP and CVP trivially solvable in polynomial time in that setting, and completely broke the
first cryptosystem based on these structures~\cite{deng2024public}.
Recalling that every ultrametric norm admits a ``diagonal'' form, out
of which an $N$-orthogonal basis can be read off immediately, the
authors raised the following fundamental question: \emph{is it possible
to find a diagonal form (or an orthogonal basis) of a given norm when
one only has access to an evaluation oracle?}

This paper gives the answer: one can compute
an orthogonal basis, with at least exponentially many queries
in the worst case.

Independently and very recently, Duan and Wu~\cite{duan2026norm} studied
the same norm-query model. They prove that orthogonalisation admits no
uniform query bound depending only on the dimension: for every
deterministic algorithm that outputs an $N$-orthogonal basis for every
ultrametric norm $N$, the number of queries is unbounded as $N$ varies,
even at fixed dimension.

While Euclidean lattice-based cryptography dominates the
post-quantum landscape, $p$-adic lattices offer a potential alternative
should the hardness assumptions behind LWE turn out to be weakened, as
suggested by the recent work of Daniel R.\
Simon~\cite{cryptoeprint:2026/1591}.

\subsection{Contributions and organisation}

The first observation is that \emph{$p$-adic commensurability}—the equivalence relation $v_1 \sim v_2$ when $N(v_1)/N(v_2) \in p^{\Z}$—is the right parameter governing the difficulty of the problems considered here: orthogonalization and orthogonality decision. In Section~\ref{sec:family}, we construct an explicit family, denoted $\NPV$, of $\frac{p^n-1}{p-1}$ norms whose distinction requires exponentially many oracle queries.

Let $\nu_{\max} \leq n$ denote the largest multiplicity of a commensurability class of $N$. The case $\nu_{\max}=n$, which yields the strongest lower bound below, includes many natural norms like the usual sup norm and every norm in the constructed family $\NPV$. We use $\Omega$ to denote asymptotic lower bounds on the number of oracle calls. Our first main result, proved in Section~\ref{sec:ortho}, is:
\begin{introthm}{\ref{thm:lowerbound-ortho}: Lower Bound Orthogonalization}
    Any algorithm (deterministic, or randomised with success probability
    $\geq 2/3$) which reconstructs an $(A,c)$-representation of $N$ from a
    norm oracle performs $\Omega(p^{\nu_{\max}-1})$ oracle calls in the
    worst case. In particular, when $\nu_{\max} = n$, the algorithm performs
    $\Omega(p^{n-1})$ oracle calls.
\end{introthm}
We provide with Algorithm \ref{algo:descend} an orthogonalisation algorithm
with a query complexity of $O(nD p^{\nu_{\max}})$
where $D$ is defined to be the geometric conditioning of
the input basis.

For the decision problem, the relevant parameter is instead the size of the largest commensurability block in the input family, denoted $l_{\max}$. In particular, for the usual sup norm and every norm in $\NPV$, all nonzero vectors are commensurable. Thus, a family of $m$ vectors has $l_{\max} = m$, and a basis $l_{\max} = n$.
This leads to our second main result (Section~\ref{sec:decide}):
\begin{introthm}{\ref{thm:lowerbound-decide}: Lower Bound Decision}
    Any algorithm (deterministic, or randomised with success probability
    $\geq 2/3$) that decides whether a basis $B$ is $N$-orthogonal performs
    $\Omega(p^{l_{\max}-1})$ oracle calls in the worst case. In particular,
    when $l_{\max} = n$ the algorithm performs $\Omega(p^{n-1})$ oracle calls.
\end{introthm}

We show that these bounds are tight (up to polynomial factors).
In the single-block case $l_{\max} = n$, Algorithm \ref{algo:decide-oracle} makes $O\left(\frac{p^n - 1}{p - 1}\right) = O(p^{n-1})$ oracle calls. Combined with the lower bound, we get a tight bound of $\Theta(p^{n-1})$.

Our exponential lower bounds do not result from an unbounded or degenerate family of norms. The family
$\NPV$ is finite, explicit, and all its members take their values in
$p^{\Z}$. Nevertheless, distinguishing its elements already requires $\Omega(p^{n-1})$ queries. We identify two distinct
parameters that govern the cost of orthogonalization. The multiplicity $\nu_{\max}$ of the largest commensurability class yields an intrinsic barrier
$\Omega(p^{\nu_{\max}-1})$, whereas the
geometric conditioning $D$ of the instance (defined in Section~\ref{sec:descend}) accounts for the failure of a uniform bound over all norms, as shown by Duan \& Wu in~\cite{duan2026norm}.

By contrast, when a diagonal representation $(A,c)$ is given, our \emph{residual commensurability} criterion (Section~\ref{sec:decide}) decides orthogonality in $O(nm^2)$ arithmetic operations: hardness lies in the norm oracle, not only in the norm structure itself.
\begin{introthm}{\ref{thm:residual}: Residual Commensurability Criterion }
      Let $N$ be a norm on $V$ and let $b_1,\dots,b_m$ be nonzero vectors of
    $V$. Let $B_1,\dots,B_k$ be their commensurability classes and
    $\bar M_1,\dots,\bar M_k$ the associated residual matrices of leading
    digits. Then the family is $N$-orthogonal if and only if
    $\ker \bar M_i = \{0\}$ over $\F_p$ for every block $B_i$.
\end{introthm}

\sage implementations of all our algorithms are publicly available at:
\url{https://github.com/Synthettikk/p-adic-Lattices-for-Cryptography}.

\section{Norms and \ensuremath{p}-adic lattices}
\label{sec:prelim}

\subsection{The norm-oracle model}

Throughout the paper we consider a black-box norm oracle: it takes as
input a vector in $\Q_p^n$ and returns the exact value of its norm.
For each fixed norm $N$, we assume that its values are taken in a discrete subset $\Gamma \subset \mathbb{R}$, allowing exact equality tests, order comparisons, tests for membership in $p^{\Z}$, and usual arithmetic operations.

We distinguish oracle-query complexity from arithmetic complexity. For algorithms operating through a norm oracle, our bounds concern the number of oracle queries, while the associated arithmetic work is measured separately and may depend on that number of calls. Each individual arithmetic operation is assumed to have polynomial cost.

\subsection{Basic facts on \ensuremath{p}-adic numbers}

We recall the minimal background on $p$-adic numbers; more details can
be found in~\cite{serre1968corps}. Let $p$ be a prime number. For
$x \in \Q^*$, write $x = p^t \cdot \frac{a}{b}$ with $t,a,b \in \Z$,
$b \neq 0$ and $p$ dividing neither $a$ nor $b$, and set
$\vert x \vert_p = p^{-t}$, $\vert 0 \vert_p = 0$. Then
$\vert \cdot \vert_p$ is a non-archimedean absolute value on $\Q$: it
is multiplicative, it vanishes only at $0$, and it satisfies
$\vert x+y \vert_p \leq \max \{ \vert x \vert_p, \vert y \vert_p \}$,
with equality whenever $\vert x \vert_p \neq \vert y \vert_p$. We call
this last property the \emph{ultrametric inequality} (or the
\emph{ultrametric equality} when it is one).

We denote by $\Q_p$ the completion of $\Q$ for $\vert \cdot \vert_p$,
by $\Z_p = \{ x \in \Q_p : \vert x \vert_p \leq 1 \}$ its ring of
integers --- a compact discrete valuation ring, with maximal ideal
$p\Z_p$ and valuation $\vp$ --- and by
$\Z_p^{\times} = \{ x \in \Q_p : \vert x \vert_p = 1 \}$ its group of
units. For a positive integer $r$, we denote by $P(\F_p^{r})$ the
projective space of $\F_p^{r}$, which we simply regard as the set of
one-dimensional subspaces (the vector lines) of $\F_p^{r}$.
It is well-known that
$\card ( P(\F_p^{r}) ) = 1+p+\cdots+p^{r-1} = \frac{p^{r}-1}{p-1}$.

\subsection{Norms and orthogonal bases}

Since $\Q_p$ is a field, $\Q_p$-vector spaces are defined as usual; we
denote them by $V$, and assume throughout that
$\dim_{\Q_p} V = n \geq 2$ is finite. A \emph{norm} $N$ on $V$ is a map
$V \to \R_{\geq 0}$ such that $N(v) = 0$ if and only if $v = 0$,
$N(\alpha v) = \vert \alpha \vert_p N(v)$ for all $\alpha \in \Q_p$ and
$v \in V$, and $N(v+w) \leq \max\{N(v),N(w)\}$ for all $v,w \in V$,
with equality whenever $N(v) \neq N(w)$ --- the ultrametric inequality
again. Note that, by induction, it yields
$N(v_1 + \dots + v_k) = \max_i N(v_i)$ as soon as the $N(v_i)$ are
pairwise distinct.

\begin{prop}[Weil,~\cite{weil1967basic}]
    \label{prop:weil}
    Let $N$ be a norm on $V$. Then there is a decomposition
    $V = V_1 \oplus V_2 \oplus \cdots \oplus V_n$ of $V$ into a direct sum
    of $n$ one-dimensional subspaces such that
    \begin{equation}
        \label{eq:Northo}
        N\Big( \sum_{i=1}^{n} v_i \Big) = \max_{1 \leq i \leq n} N(v_i)
    \end{equation}
    for all $v_i \in V_i$.
\end{prop}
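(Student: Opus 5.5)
We prove Proposition~\ref{prop:weil} by induction on $n = \dim V$. The key intermediate statement is: \emph{every subspace $W \subset V$ with $\dim W = n-1$ admits an $N$-orthogonal complement line}. That is, there is a nonzero $v \in V$ with
\[
N(w + \lambda v) = \max\{N(w), N(\lambda v)\} \quad \text{for all } w \in W,\ \lambda \in \Q_p.
\]
Given this, we apply the induction hypothesis to $(W, N|_W)$, which yields $W = V_1 \oplus \cdots \oplus V_{n-1}$ satisfying \eqref{eq:Northo}. Setting $V_n = \Q_p v$, we get
\[
N\Big(\sum_{i=1}^n v_i\Big) = \max\Big\{N\Big(\sum_{i<n} v_i\Big), N(v_n)\Big\} = \max_i N(v_i).
\]
The base case $n = 1$ is trivial.

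To find $v$, we use a best-approximation argument. Fix any $u \notin W$ and set $\delta = \inf_{w \in W} N(u - w)$. Suppose the infimum is attained at some $w_0$, and put $v = u - w_0$. Then $N(v) \le N(v - w)$ for all $w \in W$, and by scaling $N(\lambda v) \le N(\lambda v + w)$ for all $\lambda \ne 0$ and $w \in W$. From this the orthogonality follows by a short case analysis:
\begin{itemize}
\item If $N(w) \ne N(\lambda v)$, the ultrametric equality gives the claim directly.
\item If $N(w) = N(\lambda v)$, the ultrametric inequality gives $N(w + \lambda v) \le N(\lambda v)$, while minimality gives $N(w + \lambda v) \ge N(\lambda v)$. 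Hence $N(w + \lambda v) = N(\lambda v) = \max\{N(w), N(\lambda v)\}$.
\end{itemize}

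The main obstacle is showing that the infimum is attained; this is where local compactness of $\Q_p$ enters. First, all norms on a finite-dimensional $\Q_p$-space are equivalent to the sup norm in a basis. This is itself proved by induction on dimension, the crucial point being that finite-dimensional subspaces are complete and hence closed. Consequently $N$ is continuous, and closed bounded sets are compact because $\Z_p$ is compact.

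Take a minimising sequence $w_k$. The sequence is bounded, since $N(w_k) \le \max\{N(u), N(u - w_k)\}$ and the right-hand side is bounded. By compactness we may extract a convergent subsequence in $W$, and continuity of $N$ shows that its limit $w_0$ achieves $\delta$. We also need $\delta > 0$, which holds because $W$ is closed and $u \notin W$.

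A variant avoiding compactness would use that the value group of $N$ on $u + W$ is discrete. Indeed, $N(V \setminus \{0\})$ is contained in finitely many cosets of $p^{\Z}$, at most $n$ of them by a linear-independence argument on non-commensurable vectors. This discreteness makes the infimum a minimum once $\delta > 0$ is known. I would present the compactness route, since it is standard, and note the discreteness remark because it matches the paper's commensurability viewpoint.
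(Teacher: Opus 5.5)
Your proof is correct: the best-approximation step, the case analysis giving $N(w+\lambda v)=\max\{N(w),N(\lambda v)\}$, and the induction on a hyperplane all go through. You also correctly identify the one non-formal input, the completeness of $\Q_p$ (closedness of $W$), which cannot be dropped.

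The paper gives no proof of Proposition~\ref{prop:weil}; it defers to Weil. Your route is nonetheless the existential counterpart of the paper's own machinery:
\begin{itemize}
\item Lemma~\ref{lem:no-local-min} is the same principle, that a vector at minimal distance from the span of the other vectors is orthogonal to it. Its attainment of the minimum is justified by the same compactness argument, applied to $W_i\cap B_{\leq\gamma_i}$.
\item Algorithm~\ref{algo:descend} builds an orthogonal basis along a flag $W_1\subset W_2\subset\cdots$. At each step it adds a best approximation modulo the previous subspace, which is your hyperplane lemma applied inside $W_k$. The difference is that it reaches the minimiser by a finite descent (Lemma~\ref{lem:descent-test}) rather than by extracting a convergent subsequence.
\end{itemize}

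The two routes buy different things.
\begin{itemize}
\item \textbf{Compactness route.} This is the cleanest for pure existence. There the separate claim $\delta>0$ is automatic, since $\delta=N(u-w_0)$ with $u-w_0\neq 0$. It is only genuinely needed in your discreteness variant.
\item \textbf{Descent route.} This is constructive, but the paper's termination argument uses that $N$ takes its values in $\Gamma=\bigcup_r c_r p^{\Z}$. That set is read off an $(A,c)$-representation, i.e.\ from Theorem~\ref{thm:diagonal} and hence from Proposition~\ref{prop:weil} itself. As written, the descent therefore cannot serve as a proof of that proposition.
\end{itemize}

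Your linear-independence observation closes this gap. Pairwise incommensurable nonzero vectors are linearly independent, so $N(V\setminus\{0\})$ meets at most $n$ cosets of $p^{\Z}$. This gives the needed discreteness intrinsically. Together with $\delta>0$, it turns the paper's descent into an alternative, essentially constructive proof of Weil's theorem, with the number of steps controlled by $\delta$ (the paper's $d_k$).
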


\begin{defn}[$N$-orthogonal basis,~\cite{weil1967basic}]
    Let $N$ be a norm on $V$. A basis $(b_1,\dots,b_n)$ of $V$ is
    \emph{$N$-orthogonal} if the decomposition
    $V = V_1 \oplus \cdots \oplus V_n$ with $V_i = \Vect_{\Q_p}(b_i)$
    satisfies~\eqref{eq:Northo}. A family $(b_1,\dots,b_m)$ of nonzero
    vectors is $N$-orthogonal if it is an $N$-orthogonal basis of the
    subspace it spans.
\end{defn}

Thus Proposition \ref{prop:weil} states that for any norm $N$, $V$ admits an $N$-orthogonal basis.
The directness of the sum is in fact a consequence of~\eqref{eq:Northo}:
if $\sum \alpha_i b_i = 0$ then
$\alpha_i b_i = 0$ for every $i$. As such, proving~\eqref{eq:Northo} is enough to characterise orthogonality.

\begin{thm}[Goldman--Iwahori,~\cite{zhang2025notes}]
    \label{thm:diagonal}
    Let $(b_1,\dots,b_n)$ be a basis of $V$. Then every norm $N$ on $V$ can
    be written as
    \[ N\Big( \sum_{i=1}^{n} \alpha_i b_i \Big)
    = \max_{1 \leq r \leq n} c_r \Big\vert \sum_{i=1}^{n} \alpha_i a_{r,i}\Big\vert_p, \]
    where $\alpha_i \in \Q_p$, $c_r \in \R_{>0}$
    and $(a_{r,i})_{n \times n} \in \mathrm{GL}_n(\Q_p)$.
\end{thm}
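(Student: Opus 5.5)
The plan is to start from Proposition~\ref{prop:weil}, which gives an $N$-orthogonal basis $(e_1,\dots,e_n)$ of $V$, and then change coordinates from this basis to the prescribed basis $(b_1,\dots,b_n)$. Set $c_r := N(e_r) > 0$. Any $v\in V$ can be written uniquely as $v=\sum_r \beta_r e_r$ with $\beta_r\in\Q_p$. Orthogonality (equation~\eqref{eq:Northo} applied with $v_r=\beta_r e_r\in V_r$) together with homogeneity of $N$ gives
\[ N(v) = \max_{1\le r\le n} N(\beta_r e_r) = \max_{1\le r\le n} c_r\,\vert \beta_r\vert_p . \]
So in the coordinates of an orthogonal basis the norm is already diagonal.

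Next comes the change of basis. Since $(b_1,\dots,b_n)$ and $(e_1,\dots,e_n)$ are both bases of $V$, there is a matrix $A=(a_{r,i})\in \mathrm{GL}_n(\Q_p)$ with $b_i=\sum_r a_{r,i} e_r$. Then for $v=\sum_i \alpha_i b_i$ we have
\[ v=\sum_r \Big(\sum_i \alpha_i a_{r,i}\Big) e_r, \]
so that $\beta_r=\sum_i \alpha_i a_{r,i}$. Substituting this into the formula for $N(v)$ gives exactly the claimed expression, with $c_r\in\R_{>0}$ and $A$ invertible.

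The only nontrivial input is the existence of an orthogonal decomposition, which is Proposition~\ref{prop:weil}, and we take it as given. Once that is assumed there is no real obstacle. The one point to check carefully is the index convention: $a_{r,i}$ must be the $r$-th coordinate of $b_i$ in the basis $(e_r)$, so that $A$ is the transition matrix and the sum runs over $i$ inside the absolute value. If one did not want to rely on Proposition~\ref{prop:weil}, the hard step would instead be proving existence of orthogonal bases. I would do that by induction on $\dim V$: pick a line $V_1$, use local compactness of $\Q_p$ to find a best approximation and hence a complement on which distances to $V_1$ are attained, and then check that the direct-sum decomposition is orthogonal.
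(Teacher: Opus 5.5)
Your proof is correct and is the paper's argument. You take an $N$-orthogonal basis $(e_r)$ from Proposition~\ref{prop:weil}, set $c_r = N(e_r)$, and let $A$ be the transition matrix with $(b_1 \cdots b_n) = (e_1 \cdots e_n)\cdot A$ (the same index convention you use), and you spell out the substitution $\beta_r = \sum_i \alpha_i a_{r,i}$ that the paper leaves implicit.
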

\begin{proof}
By Proposition~\ref{prop:weil} there is an $N$-orthogonal basis
$(e_1,\dots,e_n)$ of $V$; put $c_i := N(e_i)$ and let
$A = (a_{r,i}) \in \mathrm{GL}_n(\Q_p)$ be the change of basis matrix,
$(b_1 \cdots b_n) = (e_1 \cdots e_n) \cdot A$.
\end{proof}

We call \emph{$(A,c)$-representation} of a norm $N$ on $V$ its
expression in the form of Theorem~\ref{thm:diagonal}, and say that $N$
is given in \emph{diagonal form} when it is presented by such a pair
$(A,c)$.

The proof makes it clear that knowing an $(A,c)$-representation of $N$
is the same as knowing an $N$-orthogonal basis: the columns of $A^{-1}$
are the coordinates, in the basis $(b_1,\dots,b_n)$, of the vectors of
an $N$-orthogonal basis.

Since $A$ is invertible, its rows define a basis of the dual space
$V^{\vee}$. We write $L_r$ for the corresponding $r$-th linear form, i.e.\
$L_r(v) = a_{r,1}v_1 + \cdots + a_{r,n}v_n$ for every
$v = \sum_{i=1}^n v_i b_i \in V$. The norm then reads
$N(v) = \max_{1 \leq r \leq n} c_r \vert L_r(v)\vert_p$.

\begin{example}
    \label{ex:norm}
    Consider the norm defined, for every
    $v = (v_1,v_2,v_3)^T \in \Q_p^3$ expressed in a basis
    $(b_1,b_2,b_3)$ of $V$, by
    \[N(v) = \max \big(\pi \vert v_1\vert_p,\ \sqrt{2}\,\vert v_1+v_2 \vert_p,\
      \vert v_1+v_2+v_3\vert_p\big),\]
    that is, $c_1 = \pi$, $c_2 = \sqrt{2}$, $c_3 = 1$ and $L_1(v) = v_1$,
    $L_2(v) = v_1+v_2$, $L_3(v) = v_1+v_2+v_3$, so that $A$ is the lower
triangular matrix all of whose entries on and below the diagonal are
$1$. The columns of $A^{-1}$ then give $e_1 = b_1-b_2$,
$e_2 = b_2-b_3$, $e_3 = b_3$, an $N$-orthogonal basis.
\end{example}

\begin{prop}
    \label{prop:stability}
    If $(e_1,\dots,e_n)$ is an $N$-orthogonal basis of $V$ and
    $e_i' = e_i + \varepsilon_i$ with $N(\varepsilon_i) < N(e_i)$, then $(e_i')$ is $N$-orthogonal.
\end{prop}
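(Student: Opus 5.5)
To prove Proposition~\ref{prop:stability}, I will check~\eqref{eq:Northo} directly for the perturbed family. By the remark after the definition of $N$-orthogonality, this identity is enough: it gives both the directness of the sum and the orthogonality. So fix scalars $\alpha_1,\dots,\alpha_n \in \Q_p$, not all zero, and put $v = \sum_i \alpha_i e_i'$. The goal is to show
\[ N(v) = \max_i N(\alpha_i e_i'). \]

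First, each $e_i'$ has the same norm as $e_i$. Since $N(\varepsilon_i) < N(e_i)$, the ultrametric equality gives $N(e_i') = N(e_i)$. Consequently $N(\alpha_i e_i') = \vert\alpha_i\vert_p N(e_i)$ for every $i$.

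Next, split $v = u + w$ with $u = \sum_i \alpha_i e_i$ and $w = \sum_i \alpha_i \varepsilon_i$. Because $(e_i)$ is $N$-orthogonal,
\[ N(u) = \max_i \vert\alpha_i\vert_p N(e_i) =: M > 0. \]
For the other part, the ultrametric inequality gives $N(w) \leq \max_i \vert\alpha_i\vert_p N(\varepsilon_i)$. I then need a strict inequality $N(w) < M$. The maximum is attained at some index $j$ with $\alpha_j \neq 0$, since terms with $\alpha_i = 0$ contribute $0 < M$. For that index,
\[ \vert\alpha_j\vert_p N(\varepsilon_j) < \vert\alpha_j\vert_p N(e_j) \leq M. \]
Hence $N(w) < N(u)$, and the ultrametric equality yields $N(v) = N(u) = M$. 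By the first step, $M = \max_i N(\alpha_i e_i')$, which is exactly~\eqref{eq:Northo} for $(e_i')$.

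The only delicate point is making this inequality strict. A maximum over finitely many terms, each strictly smaller than its partner term, is strictly smaller than the maximum of the partner terms, because the maximum is attained. This is where finiteness of the family and the restriction to indices with $\alpha_i \neq 0$ are used. Everything else is a direct application of the ultrametric equality.
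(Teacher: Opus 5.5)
Your proof is correct and follows the paper's argument: you bound $N(\sum_i \alpha_i\varepsilon_i)$ strictly below $N(\sum_i \alpha_i e_i) = \max_i \vert\alpha_i\vert_p N(e_i)$ and conclude by the ultrametric equality. You also make explicit the step $N(e_i') = N(e_i)$, which the paper leaves implicit and which is needed to identify this maximum with $\max_i N(\alpha_i e_i')$.
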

\begin{proof}
$N(\sum \alpha_i \varepsilon_i) \leq \max_i \vert \alpha_i\vert_p N(\varepsilon_i)
< \max_i \vert \alpha_i \vert_p N(e_i)
= N(\sum \alpha_i e_i)$, and the
ultrametric equality concludes.
\end{proof}

This is very good news: unlike in the real case, no infinite precision
is needed to know orthogonality exactly. We finally record three lemmas
that will be used repeatedly.

\begin{lem}[\cite{deng2025p,zhang2024explicit}]
\label{lem:unit-criterion}
\label{lem:same-norm}
Let $N$ be a norm on $V$. (i) A basis $(b_1,\dots,b_n)$ of $V$ is
$N$-orthogonal if and only if
$N\big( \sum_{i=1}^{n} \alpha_i b_i \big) = \max_{1 \leq i \leq n} N(\alpha_i b_i)$
for all $\alpha_1,\dots,\alpha_n$ such that one of them is a unit and
the other $n-1$ lie in $\Z_p$. (ii) A family $(b_1,\dots,b_m)$ with
$N(b_1) = \cdots = N(b_m) =: \rho$ is $N$-orthogonal if and only if
$p \mid \alpha_i$ for $1 \leq i \leq m$ whenever
$\alpha_1,\dots,\alpha_m \in \Z_p$ satisfy
$N\big( \sum_{i=1}^{m} \alpha_i b_i \big) < \rho$.
\end{lem}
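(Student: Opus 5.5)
We prove the two parts separately; in each, one direction is immediate and the other reduces to normalizing the coefficients.

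\textbf{Part (i).} The forward direction is immediate: $N$-orthogonality gives~\eqref{eq:Northo} for all coefficients, in particular for the restricted ones. For the converse, take arbitrary $\alpha_1,\dots,\alpha_n \in \Q_p$, not all zero, and let $j$ be an index maximising $\vert \alpha_i\vert_p$ (the coefficients themselves, not $N(\alpha_i b_i)$). Put $\beta_i := \alpha_i/\alpha_j$. Then $\beta_j = 1$ is a unit and every $\beta_i$ lies in $\Z_p$. The hypothesis gives $N(\sum \beta_i b_i) = \max_i N(\beta_i b_i)$. Multiplying by $\vert\alpha_j\vert_p$ and using homogeneity of $N$ yields $N(\sum \alpha_i b_i) = \max_i N(\alpha_i b_i)$. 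This is~\eqref{eq:Northo}, so the basis is $N$-orthogonal.

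\textbf{Part (ii), forward direction.} Suppose the family is orthogonal and $\alpha_i \in \Z_p$ satisfy $N(\sum \alpha_i b_i) < \rho$. Orthogonality gives $\max_i \vert\alpha_i\vert_p \rho < \rho$. Hence every $\vert\alpha_i\vert_p < 1$, that is, $p \mid \alpha_i$.

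\textbf{Part (ii), converse.} Assume the divisibility condition holds. Take $\alpha_i \in \Q_p$, not all zero, and normalise as in part (i) so that all $\alpha_i \in \Z_p$ with at least one unit. Then $\max_i N(\alpha_i b_i) = \rho$, and the ultrametric inequality gives $N(\sum \alpha_i b_i) \le \rho$. Strict inequality would force all $\alpha_i \in p\Z_p$, contradicting the existence of a unit coefficient. So equality holds. Undoing the scaling by homogeneity gives~\eqref{eq:Northo} on the span, so the family is $N$-orthogonal.

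The only point needing care is the choice of normalisation in part (i). One must divide by the coefficient of maximal $p$-adic absolute value, not by the one maximising $N(\alpha_i b_i)$; only the former guarantees that all normalised coefficients are integral with one unit. Everything else is homogeneity and the ultrametric inequality.
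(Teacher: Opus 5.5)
Your proof is correct and complete. The paper gives no proof of this lemma; it only quotes it from the literature. Your argument is the standard one: rescale by a coefficient of maximal $p$-adic absolute value, so that all coefficients lie in $\Z_p$ and one equals $1$, then conclude by homogeneity and the ultrametric inequality. The only point you leave implicit is linear independence of the family in the converse of (ii). It follows at once, since your equality gives $N(\sum_i \alpha_i b_i) = \max_i \vert\alpha_i\vert_p\,\rho > 0$ for every nonzero coefficient vector, as the paper notes after the definition of $N$-orthogonality.
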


\begin{lem}
\label{lem:finite-test}
Let $(b_1,\dots,b_m)$ be a family of vectors of $V$ with
$N(b_1) = \cdots = N(b_m) = \rho$. Then $(b_1,\dots,b_m)$ is
$N$-orthogonal if and only if
$N\Big(\sum_{i=1}^m \bar{\alpha}_i b_i\Big) = \rho$
for all $\bar{\alpha}_i \in \{0,\dots,p-1\}$, not all zero.
\end{lem}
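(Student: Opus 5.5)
We prove the two implications separately, using Lemma~\ref{lem:same-norm}(ii) as the bridge between orthogonality and a statement about $\F_p$-combinations.

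\textbf{Forward direction.} If $(b_1,\dots,b_m)$ is $N$-orthogonal, then $N(\sum_i \alpha_i b_i) = \max_i N(\alpha_i b_i) = \rho$ for every choice of coefficients not all zero. Restricting to $\bar\alpha_i \in \{0,\dots,p-1\}$ gives this immediately: each nonzero digit is a unit, so $N(\bar\alpha_i b_i) = \rho$ for those $i$, and $N(\bar\alpha_i b_i) = 0$ for the others.

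\textbf{Converse.} Assume the digit condition holds, and argue by contradiction through Lemma~\ref{lem:same-norm}(ii). Suppose there exist $\alpha_1,\dots,\alpha_m \in \Z_p$, not all divisible by $p$, with $N(\sum_i \alpha_i b_i) < \rho$. Write $\alpha_i = \bar\alpha_i + p\beta_i$, where $\bar\alpha_i \in \{0,\dots,p-1\}$ is the leading digit and $\beta_i \in \Z_p$. Not all $\bar\alpha_i$ vanish, since some $\alpha_i$ is a unit. On the other hand,
\[ N\Big(\sum_i p\beta_i b_i\Big) \le \max_i |p\beta_i|_p\, \rho \le p^{-1}\rho < \rho. \]
Since
\[ \sum_i \bar\alpha_i b_i = \sum_i \alpha_i b_i - \sum_i p\beta_i b_i \]
is a difference of two vectors of norm $<\rho$, the ultrametric inequality gives $N(\sum_i \bar\alpha_i b_i) < \rho$. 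This contradicts the hypothesis. Hence no such $\alpha$ exists, and Lemma~\ref{lem:same-norm}(ii) yields orthogonality.

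\textbf{Main obstacle.} The only delicate point is using Lemma~\ref{lem:same-norm}(ii) in the correct direction. If that lemma is not taken as given, one must reduce arbitrary coefficients in $\Q_p$ to the case of $\Z_p$-coefficients with at least one unit. This is done by scaling by a power of $p$ so that the largest $|\alpha_i|_p$ equals $1$, which is possible because all the $b_i$ have the same norm $\rho$. With this normalisation, orthogonality amounts to the claim that $N(\sum_i \alpha_i b_i) = \rho$ whenever $\max_i |\alpha_i|_p = 1$.
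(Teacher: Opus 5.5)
Your proof is correct and follows the paper's argument. The converse is exactly the paper's step: you write $\alpha_i = \bar\alpha_i + p\beta_i$, bound $N(p\beta_i b_i) \le \rho/p < \rho$, apply the ultrametric inequality, and go through Lemma~\ref{lem:same-norm}(ii). The only difference is cosmetic: you get the forward direction straight from the definition of $N$-orthogonality, whereas the paper uses Lemma~\ref{lem:same-norm}(ii), and both are valid.
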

\begin{proof}
By Lemma~\ref{lem:same-norm}, the only combinations that can make the
norm drop are those whose scalars are all divisible by $p$; so if there
are $\alpha_i \in \{0,\dots,p-1\}$, not all zero, with
$N(\sum_i \alpha_i b_i) < \rho$, the family is not $N$-orthogonal.
Conversely, if it is not $N$-orthogonal, there are
$a_i = \alpha_i + p\mu_i \in \Z_p$, with $\mu_i \in \Z_p$ and at least
one $\alpha_i \in \{0,\dots,p-1\}$ nonzero, such that
$N(\sum_i a_i b_i) < \rho$; since $N(p\mu_i b_i) \leq \rho/p < \rho$,
the strict ultrametric inequality gives
$N(\sum_i \alpha_i b_i) < \rho$.
\end{proof}

\subsection{Commensurability and orthogonality}

Let $N$ be a norm on $V$ and let $(A,c)$ be a diagonal representation
of $N$. We write $\kappa_i = c_i p^{\Z} \in \R_{>0}/p^{\Z}$ for the
classes induced by the values $c_i$, and call them
\emph{commensurability classes}. The set
$\Gamma := \bigcup_i c_i p^{\Z} \cup \{0\}$ of values taken by $N$
being discrete, the balls
$B_{\leq \rho} := \{v \in V : N(v) \leq \rho\}$, which are free
$\Z_p$-modules, stagnate and then grow by sudden jumps as the radius
$\rho$ increases: one gets an infinite chain of strictly nested
lattices, that is, a filtration, whose associated graded piece
$\gr_{\rho} := B_{\leq \rho} / B_{< \rho}$ is an $\F_p$-vector space of
dimension
\[ \nu_{\rho} := \card\{i : c_i \in \rho\, p^{\Z}\}, \]
the multiplicity of the class of $\rho$. We write in particular
$\nu_{\max} := \max_{\rho} \nu_{\rho} \in \{1,\dots,n\}$; for
instance the parameters $c_1 = \pi$, $c_2 = \sqrt 2$, $c_3 = 1$ of
Example~\ref{ex:norm} give three distinct classes modulo $p^{\Z}$, so
that $\nu_{\max} = 1$ there.

We are also interested in the commensurability of given vectors of $V$:
two nonzero vectors $v_1,v_2 \in V$ are \emph{commensurable} (with
respect to $N$) when the ratio of their norms lies in the image of
$\vert \cdot \vert_p$, that is when $\frac{N(v_1)}{N(v_2)} \in p^{\Z}$,
and \emph{incommensurable} otherwise. This is clearly an equivalence
relation, and here is why it is relevant to orthogonality.

\begin{prop}
\label{prop:incomm-ortho}
Let $N$ be a norm on $V$. Two incommensurable vectors
$v_1,v_2 \in V$ are $N$-orthogonal.
\end{prop}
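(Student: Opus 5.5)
We need to show that for all $\alpha_1,\alpha_2 \in \Q_p$ we have $N(\alpha_1 v_1 + \alpha_2 v_2) = \max\{N(\alpha_1 v_1), N(\alpha_2 v_2)\}$. Since the family consists of two nonzero vectors, this identity is exactly the orthogonality condition~\eqref{eq:Northo} for the subspace they span. As noted after the definition, the directness of the sum then follows automatically, so no separate linear-independence argument is needed.

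The plan is to reduce everything to the ultrametric equality. First dispose of the trivial cases: if $\alpha_1 = 0$ or $\alpha_2 = 0$, the identity is immediate. So assume both coefficients are nonzero. By homogeneity of the norm,
\[ N(\alpha_i v_i) = \vert \alpha_i \vert_p N(v_i), \qquad i = 1,2, \]
and $\vert \alpha_i \vert_p \in p^{\Z}$ because $\alpha_i \neq 0$.

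The key step is to show that $N(\alpha_1 v_1) \neq N(\alpha_2 v_2)$. Suppose instead that they were equal. Then
\[ \frac{N(v_1)}{N(v_2)} = \frac{\vert \alpha_2 \vert_p}{\vert \alpha_1 \vert_p} \in p^{\Z}, \]
which says precisely that $v_1$ and $v_2$ are commensurable, a contradiction. Hence the two norms are distinct.

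Once the norms are known to differ, the ultrametric equality for $N$ (equality whenever $N(v) \neq N(w)$) gives
\[ N(\alpha_1 v_1 + \alpha_2 v_2) = \max\{N(\alpha_1 v_1), N(\alpha_2 v_2)\}, \]
which is exactly the required orthogonality identity. There is no real obstacle here; the only point requiring care is the observation that the values $\vert \alpha \vert_p$ of nonzero scalars lie in the group $p^{\Z}$.
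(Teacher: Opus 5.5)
Your proposal is correct and follows essentially the same route as the paper. Both dispose of the case of a vanishing coefficient, use $\vert\alpha_i\vert_p \in p^{\Z}$ to show $N(\alpha_1 v_1) \neq N(\alpha_2 v_2)$ from incommensurability (you argue by contradiction, the paper directly), and conclude by the ultrametric equality.
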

\begin{proof}
Let $\alpha_1,\alpha_2 \in \Q_p$; if one of them vanishes, the equality
$N(\alpha_1 v_1 + \alpha_2 v_2) = \max(N(\alpha_1 v_1), N(\alpha_2 v_2))$
is obvious. Otherwise $\vert \alpha_1/\alpha_2\vert_p \in p^{\Z}$ while
$N(v_1)/N(v_2) \notin p^{\Z}$, so that
$\frac{N(\alpha_1 v_1)}{N(\alpha_2 v_2)} \notin p^{\Z}$; in particular
this ratio cannot equal $1$, the two norms differ, and the ultrametric
equality gives the claim.
\end{proof}

\begin{rem}
\label{rem:commensurability}
(i) The converse is false: for the usual sup norm
$\Vert v \Vert_{\infty} = \max_{1 \leq i \leq n} \vert v_i \vert_p$ on
$\Q_p^n$ all nonzero vectors are commensurable, yet many of them are
$N$-orthogonal.
(ii) A norm oracle is enough to determine commensurability.
(iii) If the vectors of a family $(b_1,\dots,b_m)$ are pairwise
incommensurable, then the family is $N$-orthogonal: this is the case
$k = m$ of Lemma~\ref{lem:blocks} below.
\end{rem}

Commensurability classes split orthogonality into independent blocks, which is the key idea behind all the algorithms of this
paper.

\begin{lem}[block decomposition]
\label{lem:blocks}
Let $b_1,\dots,b_m$ be nonzero vectors of $V$ and let
$B_1,\dots,B_k$ be their commensurability classes. Then
$(b_1,\dots,b_m)$ is $N$-orthogonal if and only if each block $B_i$ is
$N$-orthogonal.
\end{lem}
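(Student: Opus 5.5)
The forward direction is immediate: a subfamily of an $N$-orthogonal family is $N$-orthogonal. Indeed, setting the coefficients outside the block to zero in the defining identity $N(\sum \alpha_j b_j) = \max_j N(\alpha_j b_j)$ yields the identity for the block alone.

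For the converse, assume each block $B_i$ is $N$-orthogonal and fix arbitrary scalars $\alpha_1,\dots,\alpha_m \in \Q_p$. For each block set $w_i := \sum_{j \in B_i} \alpha_j b_j$. Orthogonality of $B_i$ gives
\[ N(w_i) = \max_{j\in B_i} N(\alpha_j b_j). \]
Two observations then finish the argument.

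First, whenever $w_i \neq 0$, the value $N(w_i)$ equals $N(\alpha_j b_j) = \vert\alpha_j\vert_p N(b_j)$ for some $j \in B_i$. Hence $N(w_i)$ lies in the coset $N(b_j)\,p^{\Z}$, which is the commensurability class attached to $B_i$. Distinct blocks correspond to distinct cosets of $p^{\Z}$ in $\R_{>0}$, so the nonzero values $N(w_i)$ are pairwise distinct. The only point requiring a little care is the possibility $w_i = 0$, which happens exactly when all $\alpha_j b_j$ with $j \in B_i$ vanish, again by block orthogonality. Such blocks contribute nothing and can simply be discarded.

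Second, the paper's remark that $N(v_1+\dots+v_k) = \max_i N(v_i)$ whenever the $N(v_i)$ are pairwise distinct applies to the nonzero $w_i$. This gives
\[ N\Big(\sum_{j=1}^m \alpha_j b_j\Big) = N\Big(\sum_i w_i\Big) = \max_i N(w_i) = \max_i \max_{j\in B_i} N(\alpha_j b_j) = \max_j N(\alpha_j b_j), \]
which is exactly the $N$-orthogonality of $(b_1,\dots,b_m)$ as a basis of its span. The directness of the sum follows, as noted after the definition.

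No real obstacle is expected; the argument generalises Proposition~\ref{prop:incomm-ortho} from single vectors to blocks. The only step that needs explicit mention is that block orthogonality forces $N(w_i)$ to be an attained value $N(\alpha_j b_j)$, and therefore to lie in the block's class. This is what makes the pairwise-distinctness argument go through.
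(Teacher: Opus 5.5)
Your proposal is correct and follows essentially the same route as the paper. Both arguments group the terms blockwise into $w_i$ and use block orthogonality to place each nonzero $N(w_i)$ in its block's class modulo $p^{\Z}$; distinct classes then give pairwise distinct norms, so the ultrametric equality yields the maximum. Your explicit handling of the case $w_i = 0$ is a small clarification that the paper leaves implicit.
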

\begin{proof}
A subfamily of an $N$-orthogonal family is $N$-orthogonal, which gives
one implication. Conversely, assume that every block is
$N$-orthogonal and let $b = \sum_{j=1}^m \alpha_j b_j$. Grouping the
terms blockwise, write $b = \sum_{i=1}^k w_i$ with
$w_i = \sum_{j \in B_i} \alpha_j b_j$. By orthogonality of the block
$B_i$ we have $N(w_i) = \max_{j \in B_i} N(\alpha_j b_j)$, which lies
in the class $\kappa_i$ of $B_i$ whenever $w_i \neq 0$. The classes
$\kappa_1,\dots,\kappa_k$ being pairwise distinct, the nonzero
$N(w_i)$ are pairwise distinct, so that
$N(b) = \max_{1 \leq i \leq k} N(w_i) = \max_{1 \leq j \leq m} N(\alpha_j b_j)$.
\end{proof}

Given a family $B$ of vectors of $V$, we write $B_1,\dots,B_k$
($k \leq m$) for the commensurability classes of its vectors, $l_i$ for
the number of vectors in $B_i$, and
$l_{\max} := \max_i \card B_i$. These classes can be
computed from the oracle alone, by Algorithm~\ref{algo:classes} above; they span subspaces which are automatically $N$-orthogonal to one another, and by Lemma~\ref{lem:blocks} nothing else has to be checked outside the blocks. Inside a block, however, nothing is known yet: one has to refine the criterion and look for a cancellation of the leading
terms in a linear combination. Methods
for this are given in Section~\ref{sec:decide}.

\begin{algorithm}[t]
\caption{Commensurability classes}
\label{algo:classes}
\KwIn{$m$ vectors $b_1,\dots,b_m$ of $V$, a norm oracle for $N$}
\KwOut{the list of the commensurability classes}
precompute the norms $N(b_j)$, $1 \leq j \leq m$\;
$\mathit{Classes} \gets [\,]$; $\mathit{rest} \gets (b_1,\dots,b_m)$\;
\While{$\mathit{rest} \neq [\,]$}{
  pick $b_0 \in \mathit{rest}$ and let $C$ be the set of the
  $b \in \mathit{rest}$ such that
  $N(b)/N(b_0) = p^{k_b} \in p^{\Z}$\;
  append $\{p^{k_b}b : b \in C\}$ to $\mathit{Classes}$
  \tcp*{all of norm $N(b_0)$}
  $\mathit{rest} \gets \mathit{rest} \setminus C$\;
}
\Return $\mathit{Classes}$\;
\end{algorithm}

Algorithm \ref{algo:classes} computes the commensurability classes
of a family $B$ of $m$ vectors in $m$ oracle queries and
at most $\binom{m}{2} = O(m^2)$ divisions.
In the most expensive case, all the vectors of $B$ are pairwise incommensurable,
and thus, one gets that $B$ is orthogonal for free.

\begin{example}
\label{ex:classes}
Take again the norm
$N(v) = \max(\pi \vert v_1\vert_p, \sqrt{2}\,\vert v_1+v_2\vert_p, \vert v_1+v_2+v_3\vert_p)$
of Example~\ref{ex:norm} and consider the basis
\[ B = \begin{pmatrix}
    1 & 1+p^2 & p^{-2} \\
    p^4 & p^3 & 1 \\
    p^{-2}+p & p+p^2 & 1
\end{pmatrix}, \]
whose vectors are the columns. One has
$N(b_1) = p^2$, $N(b_2) = \pi$ and $N(b_3) = \pi p^2$.
As such, the commensurability classes are $ \left\lbrace b_1 \right\rbrace$
and $\left\lbrace b_2, b_3 \right\rbrace$
and $l_{\max} = 2$.
One can normalize $b_3$ by replacing it
with $b_3' = p^2 b_3 = (1,p^2,p^2)^T$.
We will see in Example~\ref{ex:residual} that the basis
$(b_1,b_2,b_3')$ is \emph{not} $N$-orthogonal.
\end{example}

\subsection{\ensuremath{p}-adic lattices, LVP and CVP}

We first recall the definition of a $p$-adic lattice.

\begin{defn}[$p$-adic lattice,~\cite{deng2018some}]
Let $N$ be a norm on $V$ and let $b_1,\dots,b_m$ be $m$ linearly
independent vectors of $V$. A $p$-adic lattice in $V$ is the set
\[ \mathcal{L}(b_1,\dots,b_m)
:= \Big\{ \sum_{i=1}^{m} \alpha_i b_i : \alpha_i \in \Z_p \Big\}
= \sum_{i=1}^{m} \Z_p b_i \]
of all $\Z_p$-linear combinations of $b_1,\dots,b_m$. The integers $m$
and $n$ are called respectively the rank and the dimension of the
lattice. When $n = m$, the lattice is said to be of full rank.
\end{defn}

Every $p$-adic lattice is a compact subset of $V$.
Moreover, by~\cite[p.~72]{robert2000course}, if $\Omega \subset V$ is
compact then $\{N(x-a) : x \in \Omega\}$ is finite for every
$a \in V \setminus \Omega$, and $\{N(x-a) : x \in \Omega \setminus \{a\}\}$
is discrete in $\R_{>0}$ for every $a \in \Omega$; this is what makes
the two problems below well defined. Finally, if $(b_1,\dots,b_m)$ is
an $N$-orthogonal basis of the subspace
$\Vect_{\Q_p}(b_1,\dots,b_m) = \sum_{i=1}^{m}\Q_p b_i$ of $V$, it is
called an $N$-orthogonal basis of the lattice
$\mathcal{L}(b_1,\dots,b_m)$~\cite{deng2025p}.

\begin{defn}[LVP and CVP,~\cite{zhang2025norm}]
Let $\mathcal{L}=\mathcal{L}(b_1,\dots,b_m)$ be a $p$-adic lattice
in $V$. Set $\lambda_1=\max_{1\leq i\leq m}N(b_i)$ and
$ \lambda_2=\max\{N(v):v\in\mathcal{L},\ N(v)<\lambda_1\}.$
 The \emph{longest vector problem} (LVP) asks for a vector
  $v\in\mathcal{L}$ with $N(v)=\lambda_2$, the second-largest
  distinct norm value attained on $\mathcal{L}$.
 Given a target vector $t\in V$, the \emph{closest vector
  problem} (CVP) asks for a vector $v\in\mathcal{L}$ satisfying $N(t-v)=\min_{w\in\mathcal{L}}N(t-w).$
\end{defn}

Deng et al.~\cite{deng2018some} gave deterministic exponential
algorithms for LVP and CVP. These problems are therefore natural candidates for cryptographic hardness assumptions. Since both become polynomial-time solvable when an $N$-orthogonal basis is available~\cite{zhang2025norm}, understanding the complexity of computing such a basis is essential for assessing the hardness of $p$-adic lattices constructions. We thus focus on the problem of computing an $N$-orthogonal basis of $V$ and on its associated computational complexity.

\section{A family of norms: \texorpdfstring{$\NPV$}{N\_P}}
\label{sec:family}

To provide lower complexity bounds for the proofs of the subsequent sections,
we define a family of norms whose elements are exponentially hard to
distinguish. It is indexed by the vector lines of the projective space
$P(\F_p^n)$.

\begin{defn}[the family $\NPV$]
For a vector line $x \in P(\F_p^n)$, take $\overline{v} \in x$,
$\overline{v} \neq 0$, and lift it to an element
$v = \phi(x) \in \Z_p^n$. By construction $v$ is \emph{primitive} ---
its coordinates lie in $\Z_p$ and at least one of them belongs to
$\Z_p^{\times}$ --- so that it can be completed into a basis
$B_x = (v, v_{2,x},\dots,v_{n,x})$ of $\Z_p^n$. We define the norm
$N_x$ on $\Q_p^n$ as follows: if
$z = \alpha_1 v + \alpha_2 v_{2,x} + \dots + \alpha_n v_{n,x}$, then
\[ N_x (z) := \max \left( \vert \alpha_1 \vert_p,\ p\vert \alpha_2 \vert_p,\ \dots,\ p\vert \alpha_n \vert_p \right), \]
and we set $\NPV := \left\lbrace N_x : x \in P(\F_p^n) \right\rbrace$.
\end{defn}

Each $N_x$ is indeed an ultrametric norm on $\Q_p^n$: with the notation
of Theorem~\ref{thm:diagonal}, it is given in the basis $B_x$ by
$c = (1,p,\dots,p)$ and $a_{r,i} = \delta_{r,i}$. In particular all the
norms of $\NPV$ take their values in $p^{\Z}$, and they all satisfy
$\nu_{\max} = n$: the family is as far as possible from the
incommensurable (easy) case.

\begin{lem}
\label{lem:computing-Nx}
Let $x \in P(\F_p^n)$ and $v = \phi(x)$. The unit ball of $N_x$ is
$\Lambda_x = \Z_p v + p \Z_p^n$. Consequently, for a primitive vector
$z \in \Z_p^n$,
\begin{equation}
\label{eq:Nx-cases}
N_{x}(z) = \begin{dcases}
        1 & \textrm{if } \overline{z} \in x, \\
        p & \textrm{otherwise},
    \end{dcases}
\end{equation}
and the norms of $\NPV$ are pairwise distinct. In particular
$\card(\NPV) = \frac{p^n-1}{p-1}$.
\end{lem}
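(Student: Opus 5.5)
The plan is to compute the closed unit ball of $N_x$ directly in the basis $B_x$, and then read off the norm of primitive vectors by reducing modulo $p$.

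\emph{Step 1: the unit ball.} Write $z = \alpha_1 v + \sum_{i\ge2}\alpha_i v_{i,x}$. By definition, $N_x(z)\le 1$ if and only if $\alpha_1\in\Z_p$ and $\alpha_i\in p^{-1}\Z_p$ is replaced by the condition $p\vert\alpha_i\vert_p\le 1$, i.e.\ $\alpha_i\in p\Z_p$ for $i\ge2$. So the unit ball is $\Z_p v+\sum_{i\ge2}p\Z_p v_{i,x}$. Since $B_x$ is a $\Z_p$-basis of $\Z_p^n$, we have $p\Z_p^n=p\Z_p v+\sum_{i\ge2}p\Z_p v_{i,x}$. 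Adding $\Z_p v$ gives exactly $\Lambda_x=\Z_p v+p\Z_p^n$.

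\emph{Step 2: values on primitive vectors.} Let $z\in\Z_p^n$ be primitive. Its coordinates $\alpha_i$ in $B_x$ all lie in $\Z_p$, so $N_x(z)\le p$.
\begin{itemize}
\item If $\bar z\in x$, then $\bar z=\lambda\bar v$ for some $\lambda\in\F_p$. Lifting $\lambda$ gives $z-\lambda v\in p\Z_p^n$, so $z\in\Lambda_x$ and $N_x(z)\le1$. Conversely, $\bar z\neq 0$ means $z\notin p\Z_p^n$. Every vector of norm $\le p^{-1}$ lies in $p\Lambda_x\subset p\Z_p^n$, so $N_x(z)>p^{-1}$. As the values of $N_x$ lie in $p^{\Z}$, this forces $N_x(z)=1$.
\item If $\bar z\notin x$, then $z\notin\Lambda_x$: the reduction of $\Lambda_x$ modulo $p$ is the line $\F_p\bar v=x$. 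Hence $N_x(z)>1$, and combined with $N_x(z)\le p$ we get $N_x(z)=p$.
\end{itemize}

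\emph{Step 3: distinctness and cardinality.} Take $x\neq y$ in $P(\F_p^n)$ and a primitive lift $z=\phi(x)$. Then $N_x(z)=1$, while $\bar z\notin y$ because distinct lines meet only in $0$ and $\bar z\ne0$. So $N_y(z)=p\ne N_x(z)$, and the map $x\mapsto N_x$ is injective. This gives $\card(\NPV)=\card(P(\F_p^n))=\frac{p^n-1}{p-1}$.

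The statement is routine. The only point needing care is in Step 2: ruling out $N_x(z)<1$ for primitive $z$ with $\bar z\in x$ requires using that the value group is $p^{\Z}$, which holds since $c=(1,p,\dots,p)$.
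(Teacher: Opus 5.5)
Your proof is correct. Steps~1 and~3 coincide with the paper's argument, and Step~3 is nearly word for word. Step~2 takes a different route.

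\textbf{The paper's Step~2.} It never returns to the unit ball. It writes the primitive $z$ in the basis $B_x$, with coordinates $\alpha_i\in\Z_p$. Since $(\overline v,\overline{v_{2,x}},\dots,\overline{v_{n,x}})$ is a basis of $\F_p^n$, one has $\overline z\in x$ if and only if $\overline{\alpha_1}\neq 0$ and $\overline{\alpha_2}=\dots=\overline{\alpha_n}=0$. The value $\max(\vert\alpha_1\vert_p,p\vert\alpha_2\vert_p,\dots,p\vert\alpha_n\vert_p)$ is then read off exactly from these leading digits. This is a direct digit computation and needs no discreteness argument.

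\textbf{Your Step~2.} You derive the values from $\Lambda_x$ itself. You use membership in $\Lambda_x$, the inclusion $p\Lambda_x\subset p\Z_p^n$, and the fact that the image of $\Lambda_x$ in $\F_p^n$ is $x$. You combine these with the fact that $N_x$ takes its values in $p^{\Z}$.

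\textbf{Comparison.}
\begin{itemize}
\item Your route is what actually justifies the word ``Consequently'' in the statement, whereas the paper's proof recomputes from coordinates.
\item Your argument proves more. Any $p^{\Z}$-valued norm whose unit ball $\Lambda$ satisfies $p\Z_p^n\subseteq\Lambda\subseteq\Z_p^n$ equals $1$ on primitive vectors reducing into the image of $\Lambda$ in $\F_p^n$, and $p$ on all other primitive vectors.
\item The price is that you invoke the value group in both cases. In the second case, passing from $1<N_x(z)\le p$ to $N_x(z)=p$ also uses it, although your closing remark only mentions the first case.
\end{itemize}

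One cosmetic point: the sentence in Step~1 containing ``$\alpha_i\in p^{-1}\Z_p$ is replaced by'' is garbled. The condition is simply $p\vert\alpha_i\vert_p\le 1$, i.e.\ $\alpha_i\in p\Z_p$.
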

\begin{proof}
Let $z = \alpha_1 v + \alpha_2 v_{2,x} + \dots + \alpha_n v_{n,x}$ be written in the
basis $B_x$. Then $N_x(z) \leq 1$ if and only if
$\vert \alpha_1 \vert_p \leq 1$ and $p \vert \alpha_i\vert_p \leq 1$ for
$i \geq 2$, that is, $\alpha_1 \in \Z_p$ and $\alpha_i \in p\Z_p$ for $i \geq 2$;
whence the unit ball.

Let now $z \in \Z_p^n$ be primitive, so that all the $\alpha_i$ lie in
$\Z_p$. Since $(\overline v, \overline{v_{2,x}},\dots,\overline{v_{n,x}})$
is a basis of $\F_p^n$, we have $\overline{z} \in x$ if and only if
$\overline{\alpha_1} \neq 0$ and $\overline{\alpha_2} = \dots = \overline{\alpha_n} = 0$.
In that case $\vert \alpha_1\vert_p = 1$ and $p\vert \alpha_i \vert_p \leq 1$ for
$i \geq 2$, so $N_x(z) = 1$; otherwise there is some $j \geq 2$ with
$\overline{\alpha_j} \neq 0$, so that $p \vert \alpha_j\vert_p = p$, while
$\vert \alpha_1\vert_p \leq 1$ and $p\vert \alpha_i\vert_p \leq p$ for every $i$,
whence $N_x(z) = p$. Finally, for $x \neq y$ in $P(\F_p^n)$, the
primitive vector $v = \phi(x)$ satisfies $N_x(v) = 1$ (as
$\overline v \in x$) and $N_y(v) = p$ (as $\overline v \notin y$);
hence $N_x \neq N_y$, and the cardinality is that of $P(\F_p^n)$.
\end{proof}

\begin{prop}
\label{prop:indistinguishable}
Set $m = \frac{p^n-1}{p-1}$. For any $s \leq m-2$ nonzero vectors
$w_1,\dots,w_s$ of $\Q_p^n$, there exist two distinct norms of $\NPV$
which coincide on each of them. Consequently, at least
$m - 1 = p + p^2 + \dots + p^{n-1}$ vectors are needed to tell apart all
the norms of $\NPV$, and this bound is sharp.
\end{prop}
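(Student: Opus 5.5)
Since every $N_x$ is $|\cdot|_p$-homogeneous, a single scaling reduces the statement to primitive vectors, and on primitive vectors \eqref{eq:Nx-cases} turns each query into a question about one point of $P(\F_p^n)$. Concretely, for a nonzero $w \in \Q_p^n$ we write $w = p^{t} z$ with $z \in \Z_p^n$ primitive, so that $N_x(w) = p^{-t} N_x(z)$ for every $x$. By Lemma~\ref{lem:computing-Nx}, $N_x(z)$ equals $1$ if $x$ is the line $[\overline z]$ and $p$ otherwise. Thus the query $w$ singles out at most one norm of $\NPV$, namely $N_{[\overline z]}$, on which it takes a different value, and it takes the same value $p^{1-t}$ on all the others.

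Given $s \le m-2$ vectors $w_1,\dots,w_s$, we let $x_j := [\overline{z_j}] \in P(\F_p^n)$ be the associated lines. At most $s \le m-2$ points of $P(\F_p^n)$ are marked in this way, so at least two lines $x \neq y$ are unmarked. For each $j$ we then have $N_x(w_j) = p^{1-t_j} = N_y(w_j)$. This proves the first claim, and the lower bound follows: if fewer than $m-1$ vectors are used, two distinct norms of $\NPV$ give identical answers on all of them, so they cannot be told apart.

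For sharpness, we choose $m-1$ lines $x_1,\dots,x_{m-1}$, i.e.\ all of $P(\F_p^n)$ except one line $x_0$, and query the lifts $w_j := \phi(x_j)$. The answer vector of $N_x$ has a $1$ exactly in position $j$ if $x = x_j$, and is identically $p$ if $x = x_0$. These $m$ answer vectors are pairwise distinct, so the $m-1$ queries separate all the norms of $\NPV$. The sharp bound, and its equality with $p + \dots + p^{n-1}$, follows from $m - 1 = \frac{p^n-1}{p-1} - 1$.

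The step needing the most care is the reduction to primitive vectors. The value $N_x(z)$ on a primitive $z$ depends on $x$ only through whether $\overline z \in x$, and the common factor $p^{-t}$ is independent of $x$; this is exactly what makes each query reveal at most one bit about one specific line. Everything else is counting in $P(\F_p^n)$.
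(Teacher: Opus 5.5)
Your proof is correct and follows essentially the same route as the paper's: you rescale each query to a primitive vector, observe via Lemma~\ref{lem:computing-Nx} that it separates only the norm attached to its residual line, find two unmarked lines of $P(\F_p^n)$ when $s \leq m-2$, and obtain sharpness from $m-1$ queries on pairwise distinct lines. Your answer-vector formulation of the sharpness step is just a repackaging of the paper's pairwise argument.
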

\begin{proof}
For $i \in \llbracket 1,s\rrbracket$, write $w_i = p^{k_i}w_{i,0}$ with
$w_{i,0} \in \Z_p^n$ primitive, and let $x_i \in P(\F_p^n)$ be the
vector line defined by $\overline{w_{i,0}}$. By
Lemma~\ref{lem:computing-Nx} there are $m$ distinct norms in $\NPV$,
and the set $\{x_1,\dots,x_s\}$ has at most $m-2$ elements, so there
are two distinct lines
$y_1 \neq y_2 \in P(\F_p^n) \setminus \{x_1,\dots,x_s\}$. For every
$i$, $\overline{w_{i,0}}$ belongs neither to $y_1$ nor to $y_2$, so
that $N_{y_1}(w_{i,0}) = N_{y_2}(w_{i,0}) = p$
by~\eqref{eq:Nx-cases}; by homogeneity
$N_{y_1}(w_i) = p^{-k_i}N_{y_1}(w_{i,0}) = p^{-k_i}N_{y_2}(w_{i,0}) = N_{y_2}(w_i)$,
and both norms coincide on the $w_i$'s. Conversely, $m-1$ vectors
attached to $m-1$ pairwise distinct projective lines do separate the
family: given $y_1 \neq y_2$, at least one of these two lines is
represented, say $y_1 = x_i$, and then
$N_{y_1}(w_{i,0}) = 1 \neq p = N_{y_2}(w_{i,0})$.
\end{proof}

\section{Orthogonal basis computation}
\label{sec:ortho}

Recall that knowing an $(A,c)$-representation of the norm $N$ is
equivalent to knowing an $N$-orthogonal basis, by
Theorem~\ref{thm:diagonal} and its proof; in that model,
orthogonalisation --- hence LVP and CVP, by~\cite{zhang2025norm} ---
is polynomial. We therefore assume in this section that only a norm
oracle is available.

\subsection{Lower bound: no polynomial algorithm}

\begin{thm}[Lower Bound Orthogonalization]
    \label{thm:lowerbound-ortho}
    Any algorithm (deterministic, or randomised with success probability
    $\geq 2/3$) which reconstructs an $(A,c)$-representation of $N$ from a
    norm oracle performs $\Omega(p^{\nu_{\max}-1})$ oracle calls in the
    worst case. In particular, when $\nu_{\max} = n$, the algorithm performs $\Omega(p^{n-1})$ oracle calls.
\end{thm}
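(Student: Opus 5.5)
We argue by an adversary and indistinguishability argument built on the family $\NPV$ of Section~\ref{sec:family}. The plan has three steps: the case $\nu_{\max}=n$, a reduction of general $\nu_{\max}$ to it, and an extension to randomised algorithms.

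\textbf{Case $\nu_{\max}=n$.} We restrict attention to inputs $N\in\NPV$. Any correct algorithm must output an $(A,c)$-representation, which determines $N$ completely. Hence on two distinct norms $N_{y_1}\neq N_{y_2}$ of $\NPV$ it must produce different outputs, or at least outputs that cannot both be valid, since a single $(A,c)$ represents a single norm. Take a deterministic algorithm making $s\le m-2$ queries, where $m=\frac{p^n-1}{p-1}$. We run an adversary that answers every query $w$ as though the hidden norm avoids the line $\overline{w_0}$. Concretely, writing $w=p^{k}w_0$ with $w_0$ primitive, the adversary returns $p^{-k}\cdot p$.

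These answers are consistent with every $N_y$ whose line $y$ has not been hit by any query. This is exactly the computation in Lemma~\ref{lem:computing-Nx} and Proposition~\ref{prop:indistinguishable}. One point needs care: the queries are adaptive, but the transcript depends only on the adversary's answers. So the sequence $w_1,\dots,w_s$ is fixed once the answers are fixed, and Proposition~\ref{prop:indistinguishable} applies to it. It yields two unhit lines $y_1\neq y_2$ with identical transcripts. The algorithm then returns the same output on both, so it errs on one of them. This forces at least $m-1=p+\dots+p^{n-1}=\Omega(p^{n-1})$ queries.

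\textbf{General $\nu_{\max}=\nu$.} We embed the family in dimension $\nu$. Fix a decomposition $V=W\oplus W'$ with $\dim W=\nu$. On $W$ we put $N_x$ for $x\in P(\F_p^\nu)$. On $W'$ we put a fixed diagonal norm whose constants $c_i$ lie in pairwise distinct classes modulo $p^{\Z}$, all different from $p^{\Z}$; for instance $c_i=2^{i/2}$ suitably chosen, or generic reals. We then set $N(w+w')=\max(N_x(w),N'(w'))$. This norm has $\nu_{\max}=\nu$.

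The main obstacle is the indistinguishability step for a query $z=w+w'$. Because the classes of the $W'$-part differ from $p^{\Z}$, we have $N(z)=\max(N_x(w),N'(w'))$. Here $N'(w')$ is independent of $x$, and the $W$-part is handled as before through the line of the primitive part of $w$ when $w\neq0$. The set of lines of $P(\F_p^\nu)$ that can change an answer is therefore at most one per query. Proposition~\ref{prop:indistinguishable}, applied to the $W$-components, again gives two indistinguishable members after fewer than $\frac{p^\nu-1}{p-1}-1$ queries. As before, the output determines $N$ and hence $x$, which gives $\Omega(p^{\nu-1})$.

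\textbf{Randomised algorithms.} We use Yao's principle with the uniform distribution on the $M=\frac{p^\nu-1}{p-1}$ members of the family. Fix any deterministic algorithm making $s$ queries, run against the adversarial transcript. It hits at most $s$ lines, and on all remaining $M-s$ norms it sees the same transcript and gives the same output. Its output is therefore correct on at most $s+1$ norms, so its success probability is at most $(s+1)/M$. Requiring success probability at least $2/3$ forces $s\ge 2M/3-1=\Omega(p^{\nu-1})$. By Yao's principle the same bound holds in the worst case for every randomised algorithm.
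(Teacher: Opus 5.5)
Your proposal is correct and follows essentially the same route as the paper. You restrict to the family $N_{P(\mathbb{F}_p^n)}$, answer every query $p^k w_0$ with $p^{1-k}$ so that each query eliminates at most one line, and conclude $T \geq \frac{p^n-1}{p-1}-1$ deterministically and a success probability of at most $(T+1)/K$ for Yao's principle. You then handle $\nu_{\max}<n$ by a direct sum with a fixed norm whose weights are pairwise incommensurable and outside $p^{\mathbb{Z}}$; your explicit remark that adaptivity is harmless because the adversarial transcript fixes the query sequence is a welcome clarification the paper leaves implicit.
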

The idea is that, on any nonzero query, all norms in $\NPV$ return the same answer, except the one whose secret direction matches the queried vector. Whenever the oracle returns the answer shared by all the other norms, the query eliminates at most one candidate. After fewer than $\frac{p^n-1}{p-1}-1$ such queries, at least two distinct norms remain consistent with all the answers received, and the algorithm cannot distinguish them.
\begin{proof}
Assume first $\nu_{\max}=n$ and set
$K=\card(\NPV)=\frac{p^n-1}{p-1}$.
By the argument of Proposition~\ref{prop:indistinguishable},
each nonzero query $w=p^k w_0$, with $w_0$ primitive, tests only
the line $\F_p\overline{w_0}$.

Let $\mathcal A$ be deterministic and make at most $T$ queries.
Consider its execution when each such query receives the answer
$p^{1-k}$. If $T<K-1$, at least $K-T$ distinct norms produce the same
execution and output, so $\mathcal A$ cannot be correct on
every input. Thus deterministic reconstruction requires
$T\geq K-1$.

Moreover, $\mathcal A$ can be correct on at most $T$ norms
whose lines were tested and at most one of the remaining norms.
Under the uniform distribution on $\NPV$, its success
probability is therefore at most $(T+1)/K$. Yao's principle
gives $T\geq 2K/3-1$ for randomised algorithms succeeding
with probability at least $2/3$ on every input.
Both bounds are $\Omega(p^{n-1})$.

For $2\leq\nu=\nu_{\max}<n$, apply the same construction
in the first $\nu$ coordinates and define
\[ N(z) := \max\big(N_x(z_1,\dots,z_\nu),\, c_{\nu+1}\vert z_{\nu+1}\vert_p, \dots, c_n \vert z_n\vert_p\big), \]
where $c_{\nu+1},\dots,c_n$ and $1$ are pairwise incommensurable.
The largest commensurability multiplicity is then $\nu$.
The last $n-\nu$ coordinate directions are orthogonal to the
first $\nu$-dimensional block and to one another. Their weights
are fixed and may be assumed known, so the only unknown part
of the norm lies in the first block. On each query, all but
at most one norm in this extended family still return the same
answer. Since the multiplicity of the class $p^{\Z}$ is
$\nu=\nu_{\max}$, the same counting argument gives
$\Omega(p^{\nu-1})$ queries.
\end{proof}

\subsection{The algorithm}
\label{sec:descend}

To lighten the notation, we write $\gamma_i = N(b_i)$ where $(b_i)$ is
a basis of $V$. Here is the fundamental criterion used by the
algorithm.

\begin{lem}
\label{lem:no-local-min}
Let $(b_i)$ be a basis of $V$ endowed with a norm $N$ such that, for
every $i$,
\[ N(b_i) = \min_{\delta \in W_i} N(b_i-\delta),\qquad
   W_i = \Vect(b_j)_{j\neq i}. \]
Then $(b_i)$ is $N$-orthogonal.
\end{lem}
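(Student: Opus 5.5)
We argue by contraposition: if $(b_i)$ is not $N$-orthogonal, we exhibit an index $i$ and a $\delta \in W_i$ with $N(b_i - \delta) < N(b_i)$. By the definition of orthogonality and the remark following it, non-orthogonality means there are scalars $\alpha_1,\dots,\alpha_n$, not all zero, with
\[ N\Big(\sum_j \alpha_j b_j\Big) < \max_j N(\alpha_j b_j). \]
The ultrametric inequality always gives $\leq$, so the drop must be strict.

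The first step is to choose the index. Let $M := \max_j \vert\alpha_j\vert_p\gamma_j$, and pick $i$ with $\vert\alpha_i\vert_p\gamma_i = M$; in particular $\alpha_i \neq 0$. Dividing the relation by $\alpha_i$ and using homogeneity of $N$, we get
\[ N\Big(b_i + \sum_{j\neq i} \tfrac{\alpha_j}{\alpha_i} b_j\Big) = \frac{N\big(\sum_j \alpha_j b_j\big)}{\vert\alpha_i\vert_p} < \frac{M}{\vert\alpha_i\vert_p} = \gamma_i. \]

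The second step is to name the competitor. Set $\delta := -\sum_{j\neq i} (\alpha_j/\alpha_i)\, b_j$. This lies in $W_i$, and the display reads $N(b_i - \delta) < \gamma_i = N(b_i)$. This contradicts the hypothesis $N(b_i) = \min_{\delta \in W_i} N(b_i - \delta)$. Hence every relation satisfies $N(\sum_j \alpha_j b_j) = \max_j N(\alpha_j b_j)$, which is exactly~\eqref{eq:Northo} for $V_j = \Vect(b_j)$, so $(b_i)$ is $N$-orthogonal.

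There is no real obstacle here. The only point needing care is the choice of $i$ as an index realising the maximum of $N(\alpha_j b_j)$, which is what makes the normalised drop land strictly below $\gamma_i$ itself. Note that we only need the weaker hypothesis that $b_i$ admits no strictly shorter translate by $W_i$, so we never need to know whether the minimum is attained.
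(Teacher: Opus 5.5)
Your proof is correct and is the paper's argument: pick an index $i_0$ realising $\max_j \vert\alpha_j\vert_p\gamma_j$, divide by $\alpha_{i_0}$, and read off $\delta\in W_{i_0}$ with $N(b_{i_0}-\delta)<\gamma_{i_0}$. The paper also spends a few lines showing the minimum is attained, via compactness of $W_i\cap B_{\leq\gamma_i}$. As you rightly note, that step only concerns whether the hypothesis is well posed and is not needed for the implication.
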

\begin{proof}
Set $\gamma_i=N(b_i)$. If $\delta\in W_i$ satisfies
$N(\delta)>\gamma_i$, the ultrametric equality gives
$N(b_i-\delta)=N(\delta)>\gamma_i$.
Since $\delta=0$ gives the value $\gamma_i$, the minimum
may be sought over the compact set
$W_i\cap B_{\leq\gamma_i}$, and is therefore attained.
Assume now that there is $v = \sum \alpha_i b_i \in V$ with
$N(v) < \rho := \max_i \vert \alpha_i\vert_p \gamma_i$ (the inequality
$\leq$ being automatic), let $i_0$ achieve the maximum, so that
$\alpha_{i_0} \neq 0$, and put
$\delta = -\sum_{j \neq i_0} \frac{\alpha_j}{\alpha_{i_0}} b_j \in W_{i_0}$.
Then
$N( b_{i_0} - \delta) = \vert \alpha_{i_0}\vert_p^{-1} N(v)
< \vert \alpha_{i_0}\vert_p^{-1} \rho = \gamma_{i_0}$, contradicting
the minimality of $N(b_{i_0})$.
\end{proof}

In other words, local descent implies global optimality: no random
restart is needed, in contrast with real gradient descent --- an
important point, since the orthogonalisation algorithm with an oracle
is exactly a discrete $p$-adic descent: one starts from a defect of
minimality in the norms of the $b_i$ and corrects it by finding
$\delta \in W_i$ making $N(b_i-\delta)$ minimal. Such a correction is
exactly an instance of CVP, the ultrametric inequality restricting the
search space to the ball $W_i \cap B_{\leq N(b_i)}$, a finitely
generated $\Z_p$-module; it is this geometric reduction of a vector
problem to a lattice problem that gives a transparent intuition for the
polynomial equivalence, already known, between orthogonalisation and
CVP.

The next lemma is the engine of the algorithm: it makes the descent
test finite and confines it to a single commensurability block.
\begin{lem}
\label{lem:descent-test}
Let $(e_1,\dots,e_{k-1})$ be an $N$-orthogonal family,
$\gamma_j = N(e_j)$, and let
$x \notin W = \Vect(e_1,\dots,e_{k-1})$ with $\gamma = N(x)$. Set
\[ J = \Big\{ j < k \;:\; \frac{\gamma_j}{\gamma} \in p^{\Z} \Big\},
\qquad k_j := \log_p(\gamma_j/\gamma) \ \ (j \in J), \]
so that $N(p^{k_j}e_j) = N(x)$ for $j \in J$. Then
$(e_1,\dots,e_{k-1},x)$ is $N$-orthogonal if and only if
\[ N\Big(x - \sum_{j\in J}\beta_j\,p^{k_j}e_j\Big) = \gamma
\qquad \text{for all } \beta_j \text{'s in}  \{0,1,\dots,p-1\}. \]
Moreover $\card J \leq \nu_{\gamma} \leq \nu_{\max}$.
\end{lem}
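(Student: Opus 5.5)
Set $f_j := p^{k_j}e_j$ for $j\in J$, so that $N(f_j)=\gamma$, and reduce everything to the single commensurability class of $\gamma$. The family $(e_1,\dots,e_{k-1},x)$ spans the same lines as the family obtained by replacing $e_j$ with $f_j$ for $j\in J$, and rescaling vectors by powers of $p$ does not affect $N$-orthogonality. By Lemma~\ref{lem:blocks}, the family $(e_1,\dots,e_{k-1},x)$ is $N$-orthogonal if and only if each of its commensurability blocks is. Every block not containing $x$ is a subfamily of the orthogonal family $(e_j)$, hence orthogonal. The block containing $x$ is exactly $\{x\}\cup\{e_j : j\in J\}$, since $e_j\sim x$ means precisely $\gamma_j/\gamma\in p^{\Z}$. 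So the whole question reduces to the orthogonality of the equal-norm family $(f_j)_{j\in J}\cup(x)$, all of norm $\gamma$.

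Next apply Lemma~\ref{lem:finite-test} to this family. It is orthogonal if and only if $N(\bar\alpha_0 x+\sum_{j\in J}\bar\alpha_j f_j)=\gamma$ for all digit vectors $(\bar\alpha_0,(\bar\alpha_j))\in\{0,\dots,p-1\}^{1+\card J}$, not all zero.
\begin{itemize}
\item If $\bar\alpha_0=0$, the condition holds automatically, because $(f_j)_{j\in J}$ is a subfamily of an orthogonal family and all its members have norm $\gamma$.
\item If $\bar\alpha_0\neq0$, it is a unit of $\Z_p$, so we may multiply by its inverse. The combination then becomes $x-\sum_j\beta_j f_j$ with $\beta_j\equiv-\bar\alpha_j\bar\alpha_0^{-1}\pmod p$.
\end{itemize}
The $\beta_j$ obtained this way are arbitrary elements of $\Z_p$ in the prescribed residue classes, not necessarily digits. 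To pass to digit representatives, use the same argument as in the proof of Lemma~\ref{lem:finite-test}: changing $\beta_j$ by an element of $p\Z_p$ changes the vector by something of norm at most $\gamma/p<\gamma$. By the ultrametric equality this does not affect whether the norm equals $\gamma$. Hence the condition is equivalent to $N(x-\sum\beta_j f_j)=\gamma$ for all $\beta_j\in\{0,\dots,p-1\}$, which is the stated criterion. Both directions follow, since each step above is an equivalence.

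Finally, the bound $\card J\le\nu_\gamma$. The $f_j$ for $j\in J$ all lie in $B_{\le\gamma}$. Their images in $\gr_\gamma=B_{\le\gamma}/B_{<\gamma}$ are $\F_p$-linearly independent: a nontrivial digit relation would give a combination with some unit coefficient and norm $<\gamma$, contradicting orthogonality of the $f_j$ via Lemma~\ref{lem:same-norm}(ii). Since $\dim_{\F_p}\gr_\gamma=\nu_\gamma$, we get $\card J\le\nu_\gamma\le\nu_{\max}$. The main point needing care is the identification of $\dim\gr_\gamma$ with $\nu_\gamma$, which is recorded in the preliminaries from the diagonal form. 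The rest is routine bookkeeping of units and of the $p\Z_p$ perturbations.
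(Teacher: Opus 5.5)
Your proof is correct and follows the paper's argument essentially step by step. You reduce to the block of $x$ via Lemma~\ref{lem:blocks}, apply the digit test of Lemma~\ref{lem:finite-test} with the split on $\bar\alpha_0$, and bound $\card J$ by the $\F_p$-independence of the images of the $p^{k_j}e_j$ in $\gr_\gamma$. Your extra step, replacing the coefficients obtained after dividing by the unit $\bar\alpha_0$ by their digit representatives at the cost of a perturbation of norm at most $\gamma/p$, makes explicit what the paper only asserts when it says this division ``brings us back to the stated form.''
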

\begin{proof}
By Lemma~\ref{lem:blocks}, $(e_1,\dots,e_{k-1},x)$ is
$N$-orthogonal if and only if each of its commensurability blocks is.
The blocks that do not contain $x$ are subfamilies of
$(e_1,\dots,e_{k-1})$, hence are orthogonal; the block containing $x$
is $\{x\} \cup \{e_j : j \in J\}$, which after the normalisation
$e_j \mapsto p^{k_j}e_j$ --- a rescaling that changes neither the block
nor its orthogonality --- consists of vectors of norm $\gamma$. By
Lemma~\ref{lem:finite-test}, that block is orthogonal if and only if
$N(\bar\alpha_0 x + \sum_{j \in J}\bar\alpha_j p^{k_j}e_j) = \gamma$
for all $\bar\alpha \in \{0,\dots,p-1\}^{J \cup \{0\}}$ not all zero;
if $\bar\alpha_0 = 0$ this holds automatically, the $e_j$ being already
orthogonal, and if $\bar\alpha_0 \neq 0$, dividing by the unit
$\bar\alpha_0$ brings us back to the stated form.
Finally, the same criterion applied to the family $(p^{k_j} e_j)_{j \in J}$ shows that every nonzero $\F_p$-linear combination of these vectors has norm $\gamma$ by orthogonality. Therefore, their images in $\text{gr}_{\gamma} = B_{\leq \gamma} / B_{< \gamma}$ are $\F_p$-independent,
thus $\card J \leq \nu_{\gamma}$.
\end{proof}

We can now build in Algorithm \ref{algo:descend} a recursive algorithm
to compute an $N$-orthogonal basis
$(e_1,\dots,e_n)$ out of a basis $(b_1,\dots,b_n)$ of $V$. At step $k$,
assuming $(e_1,\dots,e_{k-1})$ orthogonal, it checks whether the norm
of $x := b_k$ can be made to drop by subtracting an $\F_p$-linear
combination of those $e_j$ that are commensurable with it; if so it
replaces $x$ by the new vector (it goes down one ``floor'') and starts
again, until no further descent is possible, i.e.\ until the global
minimum has been found. When no $\F_p$-linear combination makes the
norm drop, $e_k := x$ is orthogonal to $(e_1,\dots,e_{k-1})$ by
Lemma~\ref{lem:descent-test}.

\begin{algorithm}[t]
\caption{\textsc{Descend}: orthogonalisation with an oracle}
\label{algo:descend}
\KwIn{a norm oracle for $N$ on $V$, a basis $B = (b_1,\dots,b_n)$}
\KwOut{an $N$-orthogonal basis $(e_1,\dots,e_n)$ of $V$}
$E \gets \emptyset$; $W \gets \{0\}$\;
\For{$k = 1,\dots,n$}{
  $x \gets \argmin_{b_j \notin W} N(b_j)$; $\gamma \gets N(x)$\;
  \tcp{descent to a best approximation (CVP)}
  \While{\textnormal{true}}{
    $J \gets \{ j < k : \gamma_j/\gamma \in p^{\Z}\}$\;
    $k_j \gets \log_p(\gamma_j/\gamma)$ for $j \in J$\;
    $\mathit{found} \gets \textnormal{false}$\;
    \ForEach{$\beta \in \{0,\dots,p-1\}^J \setminus \{0\}$}{
      $y \gets x - \sum_{j \in J} \beta_{j}\, p^{k_j} e_j$\;
      \If{$N(y) < \gamma$}{
        $x \gets y$; $\gamma \gets N(y)$;
        $\mathit{found} \gets \textnormal{true}$; \textbf{break}\;
      }
    }
    \lIf{\textnormal{not} $\mathit{found}$}{\textbf{break}
      \tcp*[f]{$x$ is optimal}}
  }
  $e_k \gets x$; $\gamma_k \gets \gamma$;
  $W \gets W \oplus \Q_p \cdot e_k$\;
}
\Return $E = (e_1,\dots,e_n)$\;
\end{algorithm}

To evaluate the complexity of this algorithm, we must bound the number
of times the norm of a vector can drop.
At the $k$-th iteration, let $x_k^{(0)}$ denote the vector selected at step 3 of the algorithm, $\gamma$ its norm and let $W_{k-1}$ be the subspace
spanned by the first $k-1$ orthogonalised vectors.
Since $x_k^{(0)} \notin W_{k-1}$ and $W_{k-1}$ is closed, the distance between $x_k^{(0)}$ and $W_{k-1}$, $d_k := \min_{w \in W_{k-1}} N(x_k^{(0)} - w)$
is strictly positive.
Recall that the set of values taken by $N$ is contained in
$\Gamma = \bigcup_{r=1}^n c_r p^{\Z} \cup \{0\}$. For each $r$, the class $c_r p^{\Z}$ contains at most one element in the interval $\,] \gamma / p, \gamma]$,
since two consecutive elements of this class differ by a factor $p$.
Hence the norm can take at most $n$ distinct values on an interval
$\,]\gamma/p, \gamma]$. Since each update $x \leftarrow x - w$ makes the norm drop strictly, after at most $n$ updates the norm is at most $\gamma /p$.
Repeating this argument, after $t$ updates the working vector $x_k^{(t)}$ satisfies $N(x_k^{(t)}) \leq N(x_k^{(0)}) p^{- \lfloor t/n \rfloor}$. On the other hand, every working vector is of the form $x_k^{(0)} - w$, $w \in W_{k-1}$, so its norm is at least $d_k$. Therefore the number of updates is at most $n \lceil \log_p ( N(x_k^{(0)})/d_k ) \rceil$.

We thus define the
\emph{geometric conditioning} of the instance as the maximal number of
descents,
\[ D = \max_{2 \leq k \leq n} n \left\lceil \log_p \left( \frac{N(x_k^{(0)})}{d_k} \right) \right\rceil, \]
a parameter which is finite and fixed by the input.

\begin{prop}[termination, complexity, correctness]
\label{prop:descend}
Algorithm~\ref{algo:descend} is correct and  terminates after finitely many steps, with
a query complexity of
\[ \#\text{oracle calls} = O\big(n(1 + D) \cdot p^{\nu_{\max}}\big). \]
\end{prop}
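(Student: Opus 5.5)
We prove the three claims of Proposition~\ref{prop:descend} in turn, by induction on the outer index $k$. The invariant is that after iteration $k-1$ the family $(e_1,\dots,e_{k-1})$ is $N$-orthogonal and spans the same subspace $W_{k-1}$ as the $k-1$ basis vectors selected so far.

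\emph{Correctness.} At step $k$ the vector $x_k^{(0)}$ is a $b_j\notin W_{k-1}$. Every update subtracts an element of $W_{k-1}$, so the working vector $x$ stays outside $W_{k-1}$. Moreover $W_{k-1}\oplus\Q_p x = W_{k-1}\oplus \Q_p x_k^{(0)}$, so the spans evolve as claimed and after $n$ steps we obtain a basis of $V$. The inner loop exits only when no $\beta\in\{0,\dots,p-1\}^J\setminus\{0\}$ gives $N(y)<\gamma$. The choice $\beta=0$ trivially gives $N(x)=\gamma$. Hence the criterion of Lemma~\ref{lem:descent-test} holds for all $\beta$, and $(e_1,\dots,e_{k-1},x)$ is $N$-orthogonal. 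This propagates the invariant. Note that $J$ and the exponents $k_j$ must be recomputed after each update, since $\gamma$ changes; the algorithm does this inside the while loop.

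\emph{Termination and counting.} The counting argument preceding the statement applies to each $k$. The working vectors all lie in $x_k^{(0)}-W_{k-1}$, so their norms are at least $d_k>0$. The norms take values in $\Gamma$, which meets each interval $]\gamma/p,\gamma]$ in at most $n$ points. Each update makes the norm drop strictly. Together these give at most $n\lceil\log_p(N(x_k^{(0)})/d_k)\rceil\le D$ updates at step $k$. For $k=1$ there is no update, since $J=\emptyset$. Thus the while loop runs at most $1+D$ times per step: $D$ successful passes and one final failing pass.

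\emph{Query bound.} Each pass of the while loop enumerates at most $p^{\card J}-1$ vectors $\beta$, with one oracle call each. Lemma~\ref{lem:descent-test} gives $\card J\le\nu_\gamma\le\nu_{\max}$, so a pass costs at most $p^{\nu_{\max}}$ queries. The remaining queries are the norms $N(b_j)$ used in the argmin at line 3, at most $n$ per step, and these can be cached. They cost $O(n^2)$ in total, which is absorbed because $D\ge n$ whenever some descent occurs. If one prefers not to rely on that, precompute the norms once for $n$ queries. Summing over the $n$ steps gives $O(n(1+D)p^{\nu_{\max}})$ oracle calls.

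The main obstacle is justifying that the per-step bound on the number of descents is really controlled by $D$. First, $d_k$ must be defined relative to the actual $x_k^{(0)}$ and $W_{k-1}$ produced by the algorithm. This is the case, since the definition of $D$ refers to those quantities. Second, $d_k$ must be a genuine positive minimum, which follows because $W_{k-1}$ is closed and, by the ultrametric argument in the proof of Lemma~\ref{lem:no-local-min}, the infimum is attained on a compact ball. Once these points are settled, the remaining steps are bookkeeping.
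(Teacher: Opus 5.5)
Your proof is correct and follows the same route as the paper. Correctness comes by induction via Lemma~\ref{lem:descent-test}; each step has at most $D$ successful passes plus one certifying pass, by the counting argument preceding the statement; and each pass costs at most $p^{\nu_{\max}}$ queries since $\card J \le \nu_{\max}$. Your extra bookkeeping (the $k=1$ case, the recomputation of $J$, the preserved spans, and precomputing the $N(b_j)$ used in the argmin) fills in details the paper leaves implicit, without changing the argument.
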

\begin{proof}
By definition of $D$, for each
vector $k \leq n$ the algorithm finds at most $D$ successive combinations that
decrease the norm, hence performs at most $1+D$ complete explorations
of a level (the last one certifying optimality). During one
exploration, the space of coefficients $\F_p^J$ has size at most
$p^{\nu_{\max}}$ by Lemma~\ref{lem:descent-test}, so a level costs
$O(p^{\nu_{\max}})$ oracle calls; summing over the $n$ basis vectors
gives the announced complexity, and in particular termination.

Finally, assume by induction that $(e_1,\dots,e_{k-1})$ is
orthogonal. At step $k$ the loop stops when the search over $J$ fails, which by Lemma~\ref{lem:descent-test} means exactly that
$(e_1,\dots,e_{k-1},e_k)$ is $N$-orthogonal.
\end{proof}

Comparing with the lower bound $\Omega(p^{\nu_{\max}-1})$, the
algorithm is optimal up to a factor $\mathrm{poly}(n,D)\cdot p$. Certifying that a basis is $N$-orthogonal, on the other
hand, requires solving the orthogonality decision problem, which is the subject of the next section.

\begin{example}
\label{ex:descend}
Take again the norm
$N(v) = \max(\pi \vert v_1\vert_p, \sqrt{2}\vert v_1+v_2\vert_p, \vert v_1+v_2+v_3\vert_p)$,
assuming $p \geq 5$, and the basis $B$ of Example~\ref{ex:classes}.
Since $b_1$ and $b_2$ belong to distinct commensurability classes
(generated by $1$ and $\pi$), they are $N$-orthogonal from the start
and we set $e_1 = b_1$, $e_2 = b_2$. The vector $b_3$, on the other
hand, is commensurable with $e_2$: Algorithm~\ref{algo:descend} first
normalises it as $b_3' = p^2 b_3 = (1,p^2,p^2)^T$, of norm $\pi$, then
kills the dependency in the residual space of the class $\pi$ by a
cancellation modulo $p$,
\[ x = b_3' - e_2 = (-p^2,\ p^2-p^3,\ -p)^T, \]
whose norm drops strictly to
$N(x) = \max(\pi p^{-2}, \sqrt{2}p^{-3}, p^{-1}) = p^{-1}$ (because
$p \geq 5$): the vector $x$ has just changed commensurability class and
joined that of $e_1$. A second elimination, now in the class of $1$,
gives
\[ e_3 = x + p^3 e_1 + p^2 e_2
= (p^3+p^4,\ p^2-p^3+p^5+p^7,\ p^3+2p^4)^T, \]
of norm $\max(\pi p^{-3}, \sqrt{2}p^{-2}, p^{-2}) = \sqrt{2}p^{-2}$:
$e_3$ has reached the class of $\sqrt 2$, which was vacant,
so that $(e_1,e_2,e_3)$ is an $N$-orthogonal basis by
Remark~\ref{rem:commensurability}(iii). For readability we used $-1$ to represent $p-1$ modulo $p$; by the ultrametric inequality, it does not affect whether the norm of the corresponding
combination drops or not.
\end{example}

Finally, it is clear that $D$ depends on the chosen norm: it is the
algorithmic conditioning of the instance, and depending on $N$ it can
be arbitrarily large. This exhibits the two distinct
parameters that govern the cost. The multiplicity $\nu_{\max}$ of the
largest commensurability class yields an intrinsic barrier
$\Omega(p^{\nu_{\max}-1})$ that no algorithm can cross, whereas the
geometric conditioning $D$ is exactly what makes a uniform bound fail, as in~\cite{duan2026norm}.
From a cryptographic perspective, $p^{\nu_{\max}}$ being unconditional on
the algorithm makes it desirable as a plausible security parameter
while the handling of $D$ may need more investigation.

\section{Deciding orthogonality}
\label{sec:decide}

\subsection{The residual commensurability criterion}

In this subsection, suppose that an $(A, c)$-representation of the norm is known.
Let $B = (b_1,\dots,b_m)$ be a family of nonzero vectors of $V$ and let
$B_1,\dots,B_k$ ($1 \leq k \leq m$) be its commensurability classes. In
each block $B_i$ we may normalise every vector, $b_j' = \alpha_j b_j$,
so that all the $b_j'$ have the same norm $\rho_i > 0$; we define
$\bar M_i$, the residual matrix with coefficients in $\F_p$
containing the leading digits of the dominant $L_r(b_j')$,
as follows.

The columns of $\bar M_i$ are indexed by the indices $j$ such
that $b_j \in B_i.$
The rows of $\bar M_i$ are indexed by the indices
$r$ such that $c_r\vert L_r(b_j)\vert_p = \rho_i$
for some $b_j \in B_i.$

For such a couple of $r$ and $j,$
$\vert L_r(b_j)\vert_p = \frac{\rho_i}{c_r}.$
As $\vert \cdot \vert_p$ takes its
values in $p^{\Z} \cup \{0\}$, this forces $\rho_i/c_r \in p^{\Z}$, say
$\rho_i/c_r = p^{-d_{r}^{(i)}}$ with $d_{r}^{(i)} \in \Z$.
In that case, one can write
$L_r(b_j) = a_0 p^{d_{r}^{(i)}} + \cdots$
and $a_0 = L_r(b_j)\,p^{-d_{r}^{(i)}} \bmod p \in \F_p$.
For the same $r$ and other $j$'s of the same
commensurability class,
$\vert L_r(b_j)\vert_p \leq \frac{\rho_i}{c_r}=p^{-d_{r}^{(i)}},$
thus, $L_r(b_j)\,p^{-d_{r}^{(i)}}\bmod p \in \F_p$
and is equal to $0$ when $c_r \vert L_r(b_j)\vert_p < \rho_i.$

As such,  we can define
$\bar M_i,$ a matrix with coefficients in $\F_p$, as follows: for $j$ such that $b_j \in B_i$
and for $r$ such that $c_r \vert L_r(b_\cdot)\vert_p$ reaches $\rho_i,$
\[\left( \bar M_i \right)_{r,j}:=  L_r(b_j)\,p^{-d_{r}^{(i)}} \bmod p \in \F_p.\]

\begin{thm}[Residual Commensurability Criterion]
\label{thm:residual}
Let $N$ be a norm on $V$ and let $b_1,\dots,b_m$ be nonzero vectors of
$V$. Let $B_1,\dots,B_k$ be their commensurability classes and
$\bar M_1,\dots,\bar M_k$ the associated residual matrices of leading
digits. Then the family is $N$-orthogonal if and only if
$\ker \bar M_i = \{0\}$ over $\F_p$ for every block $B_i$.
\end{thm}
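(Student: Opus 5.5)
By Lemma~\ref{lem:blocks}, it suffices to treat a single block. So fix a block $B_i$, normalised so that every $b_j'$ ($j\in B_i$) has norm $\rho_i$. We must show that this block is $N$-orthogonal if and only if $\ker\bar M_i=\{0\}$. By Lemma~\ref{lem:finite-test}, the block is orthogonal if and only if $N(\sum_j \bar\alpha_j b_j')=\rho_i$ for every nonzero $\bar\alpha\in\{0,\dots,p-1\}^{B_i}$. The whole proof therefore reduces to one computation: for $\alpha_j\in\Z_p$ (lifting $\bar\alpha$), the inequality $N(\sum_j\alpha_j b_j')<\rho_i$ holds exactly when $\bar M_i\bar\alpha=0$ in $\F_p$.

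To prove this equivalence, write $N(v)=\max_r c_r|L_r(v)|_p$ and split the indices $r$ into two groups.

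\emph{Non-dominant rows.} These are the $r$ with $c_r|L_r(b_j')|_p<\rho_i$ for all $j\in B_i$. For such $r$, the ultrametric inequality gives $c_r|L_r(\sum_j\alpha_jb_j')|_p<\rho_i$ automatically, since $|\alpha_j|_p\le 1$.

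\emph{Dominant rows.} These are the rows of $\bar M_i$, with $\rho_i/c_r=p^{-d_r^{(i)}}$. Each $L_r(b_j')p^{-d_r^{(i)}}$ lies in $\Z_p$, and its reduction mod $p$ is the entry $(\bar M_i)_{r,j}$. By linearity of $L_r$,
\[ p^{-d_r^{(i)}}L_r\Big(\sum_j\alpha_jb_j'\Big)\equiv(\bar M_i\bar\alpha)_r \pmod p. \]
Hence $c_r|L_r(\sum_j\alpha_jb_j')|_p<\rho_i$ if and only if $(\bar M_i\bar\alpha)_r=0$; otherwise the value equals $\rho_i$.

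Taking the maximum over all $r$, we get $N(\sum_j\alpha_jb_j')<\rho_i$ if and only if $\bar M_i\bar\alpha=0$. Combined with Lemma~\ref{lem:finite-test}, this says the block is orthogonal if and only if no nonzero $\bar\alpha\in\F_p^{B_i}$ lies in $\ker\bar M_i$. Applying Lemma~\ref{lem:blocks} across all blocks then gives the theorem.

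The main point needing care is that the matrix is built from the normalised vectors $b_j'$. The definition in the text writes $L_r(b_j)$, but the entries must be read with $b_j'$; otherwise the scalings $d_r^{(i)}$ are inconsistent across the columns $j$. I will state this reading explicitly. I will also check that the set of dominant rows is computed with respect to the $b_j'$, and that the normalisation by $p$-powers does not affect orthogonality, which holds since rescaling preserves the blocks. Everything else is routine ultrametric bookkeeping.
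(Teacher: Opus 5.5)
Your proof is correct and follows the paper's argument. You reduce to a single normalised block via Lemma~\ref{lem:blocks}, apply Lemma~\ref{lem:finite-test}, and identify a drop of $N$ on a combination with coefficients in $\{0,\dots,p-1\}$ with the condition $\bar M_i\bar\alpha=0$, checked row by row. Your single equivalence, with the non-dominant rows handled explicitly, is slightly cleaner than the paper's two-direction version, whose converse leaves those rows implicit. Your reading of the entries through the normalised $b_j'$ matches the paper's assumption that all vectors of the block already have norm $\rho$.
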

\begin{proof}
By Lemma~\ref{lem:blocks} the family is $N$-orthogonal if and only if
each block is, so we may work inside a single block of commensurability, of size $l$ say;
and the normalisation $b_j' := \alpha_j b_j$ with
$\alpha_j \in \Q_p^*$ does not affect orthogonality --- the factor is
absorbed by the coefficients of the linear combination --- so we may
assume that all the $b_j$ have norm $\rho$. By
Lemma~\ref{lem:finite-test}, the block is then $N$-orthogonal if and
only if $N(b) = \rho$ for every $b = \sum_{j=1}^l \bar\alpha_j b_j$
with $\bar\alpha \in \{0,\dots,p-1\}^l \setminus \{0\}$.

Let us assume that our block is not $N$-orthogonal.
Let $\bar\alpha \in \{0,\dots,p-1\}^l \setminus \{0\}$ be a corresponding
combination with
$N(\sum_{j=1}^l \bar\alpha_j b_j ) < \rho.$
Then $\max_r c_r \vert L_r(\sum_{j=1}^l \bar\alpha_j b_j  ) \vert_p < \rho. $
As such, for any $r,$
$\val \left(  L_r(\sum_{j=1}^l \bar\alpha_j b_j  ) \right) > - \log_p \left( \frac{\rho}{c_r} \right)=d_r.$
Consequently, for a given $r$,
$\val \left( \sum_{j=1}^l p^{-d_r}L_r(  b_j  ) \bar\alpha_j \right) >0$
and for every $j$, $\val \left( p^{-d_r}L_r(  b_j  ) \bar\alpha_j  \right) \geq 0.$
By the ultrametric inequality, it means that in $\F_p,$
$\sum_{j=1}^l  \left( p^{-d_r} L_r(  b_j) \bar\alpha_j  \bmod p \right)  = 0$
and thus
\[ \sum_{j=1}^l \bar M_{r,j} \bar\alpha_j =0.\]
This means that $\bar \alpha$ is a non-zero element of $\ker \bar M,$
which proves the first implication.

Conversely, let $\bar \alpha$ be a non-zero element of $\ker \bar M$
and let $r$ be the index of a row of $\bar M.$
We get that $\sum_{j=1}^l \bar M_{r,j} \bar\alpha_j =0.$
As before, for every $j$,
$\val \left( p^{-d_r}L_r(  b_j  ) \bar\alpha_j  \right) \geq 0$
and in addition,
$\sum_{j=1}^l  \left( p^{-d_r} L_r(  b_j) \bar\alpha_j  \bmod p \right) = 0$.
This means that
$\left( \sum_{j=1}^l  \left( p^{-d_r} L_r(  b_j) \bar\alpha_j   \right) \right) \bmod p = 0$,
and thus $\val \left( \sum_{j=1}^l p^{-d_r}L_r(  b_j  ) \bar\alpha_j \right) >0.$
Consequently, for any $r,$
$\val \left(  L_r(\sum_{j=1}^l \bar\alpha_j b_j  ) \right) > - \log_p \left( \frac{\rho}{c_r} \right)=d_r$,
and finally,
$\max_r c_r \vert L_r(\sum_{j=1}^l \bar\alpha_j b_j  ) \vert_p < \rho, $
which concludes.
\end{proof}

Theorem~\ref{thm:residual} gives an algorithm to test orthogonality:
compute the commensurability classes, build the matrix $\bar M_i$ of
each block from the formula above --- reading off the valuations and
the leading digits of the $L_r(b_j)$, which costs $O(nl)$ basic
operations for a block of size $l$ --- and compute its kernel.

One optimisation can be added: a matrix
with strictly more columns than rows already has a non-trivial kernel.
As such, kernel computation can be avoided for the so-called `wide' matrices.
The resulting Algorithm is presented as Algorithm \ref{algo:decide-(A,c)}.

\begin{algorithm}[t]
\caption{Residual commensurability with $(A, c)$}
\label{algo:decide-(A,c)}
\KwIn{a family $B$ of $m \geq 2$ vectors $b_j$ of size $n$, a diagonal representation $(A,c)$ of a norm $N$}
\KwOut{true if $B$ is $N$-orthogonal, false otherwise}
compute the $k \leq m$ commensurability classes with
  Algorithm~\ref{algo:classes}\;
\lIf{$\card \mathit{Classes} = m$}{\Return true
  \tcp*[f]{pairwise incommensurable}}
\ForEach{class $B_i$ of $l_i \geq 2$ vectors, of common norm $\rho_i$}{
    build $\bar M_i$, having $n_i$ rows\;
    \lIf{$n_i > l_i$ \textnormal{or} $\ker \bar M_i \neq \{0\}$}{\Return false}
}
\Return true\;
\end{algorithm}

In the worst case all the vectors are commensurable and
the kernel computation cannot be split into smaller
blocks and dominates the running time, which gives $O(nm^2)$
elementary operations. It becomes linear in the favourable cases.

\begin{example}
\label{ex:residual}
Take again the norm and the basis of Example~\ref{ex:classes}, where
$b_1$ is alone in its commensurability class while $b_2$ and $b_3$ are
commensurable with each other.
The last vector is replaced by $b_3' = p^2 b_3 = (1,p^2,p^2)^T$.
We get that $\rho_2=N(b_2)=N(b_3')=\pi.$
This value is reached only by
$\pi \vert L_1(b_2) \vert_p=\pi \vert L_1(b_3') \vert_p.$
The $c_2 \vert L_2(b_j) \vert_p $ and $c_3 \vert L_3(b_j) \vert_p $
are not even commensurable with $\pi.$
As such, $\bar M_2$
has one row and two columns, being equal to
$\bar M = \begin{pmatrix} 1 & 1\end{pmatrix}$.
 The kernel vector
$(1,-1)$ even exhibits the cancellation, $b_3' - b_2$, which
is precisely the one that made the norm drop in
Example~\ref{ex:descend}.
\end{example}

\begin{rem}
(i) For the sup norm, all vectors can be normalised so that their norm
becomes $1$. The matrix $\bar M$ is then simply the matrix of the
normalised vectors mod $p$, and the family is orthogonal if and only if
$\bar M$ has maximal rank.
(ii) This criterion requires knowing a diagonal representation of the
norm: both the $c_r$ and the $L_r$ are needed.
\end{rem}

\subsection{With a norm oracle}

The previous subsection decides orthogonality when an
$(A,c)$-representation of the norm is available. With only a norm
oracle at hand, kernel computations are out of reach.
Given a block of $l$
commensurable vectors, all of norm $\rho$ --- the commensurability
classes themselves are computed with Algorithm~\ref{algo:classes} ---
Lemma~\ref{lem:finite-test} tells us it is enough to check
whether
$N\big(\sum_{i=1}^l \alpha_i b_i\big) = \max_i \vert\alpha_i\vert_p \rho$
for every $\alpha_1,\dots,\alpha_l \in  \{0,\dots,p-1\}$, not all zero.
This naively gives
$p^l-1$ combinations.
However, if $v = \sum_{i=1}^l \alpha_i b_i$ makes
the norm drop, so does $\alpha v$ for every
$\alpha \in \{1,\dots,p-1\}$, since
$N(\alpha v) = \vert \alpha\vert_p N(v) = N(v)$; working in the
projective space avoids these $p-1$
repetitions and leaves
\[ \frac{p^l - 1}{p - 1} \]
norm computations to perform. This remains exponential in $l$, hence
not very practical. It is worth noting that it is exactly the count
obtained by Duan and Wu for the worst-case query complexity of the LVP
in a lattice of rank $l$~\cite{duan2026norm}.

We write $l_{\max}$ for the number of vectors in the largest
commensurability class; it can be any integer between $1$ and $n$,
depending entirely on the family being tested, and is known as soon as
the classes have been computed. This leads us to the following theorem.

\begin{thm}[Lower Bound Decision]
    \label{thm:lowerbound-decide}
    Any algorithm (deterministic, or randomised with success probability
    $\geq 2/3$) that decides whether a basis $B$ is $N$-orthogonal performs
    $\Omega(p^{l_{\max}-1})$ oracle calls in the worst case. In particular,
    when $l_{\max} = n$ the algorithm performs $\Omega(p^{n-1})$ oracle calls.
\end{thm}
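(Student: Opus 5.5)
We prove the bound by an adversary argument built on the family $\NPV$, following the proof of Theorem~\ref{thm:lowerbound-ortho}. We first treat the case $l_{\max}=n$. The input basis is fixed: take $B=(u_1,\dots,u_n)$, a basis of $\Z_p^n$ whose reduction mod $p$ is a basis of $\F_p^n$ (for instance the canonical basis). The hidden norm is either the sup norm $\Vert\cdot\Vert_\infty$, for which $B$ is orthogonal, or some $N_x\in\NPV$.

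For every $x$, each $u_i$ is primitive, so $N_x(u_i)\in\{1,p\}$ and all $u_i$ are commensurable; hence $l_{\max}=n$. We then use the residual criterion of Theorem~\ref{thm:residual}, or directly Lemma~\ref{lem:finite-test}, to check that $B$ is \emph{not} $N_x$-orthogonal. Choose $x$ to be a line not spanned by any $\bar u_i$, which is possible as soon as $n\geq 2$. Then all $N_x(u_i)=p$, and the unit ball $\Lambda_x=\Z_p\phi(x)+p\Z_p^n$ contains a nontrivial combination $\phi(x)=\sum\bar\alpha_iu_i$ modulo $p$, of norm $1<p$. So $B$ is non-orthogonal for each of these $K':=\frac{p^n-1}{p-1}-n$ norms.

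Next we compare the answers of the two kinds of norm. Rescale the sup norm to $N_0:=p\Vert\cdot\Vert_\infty$. For a primitive $w_0$ one has $N_0(w_0)=p$, and by \eqref{eq:Nx-cases} $N_x(w_0)=p$ unless $\bar w_0\in x$. Homogeneity, as in Proposition~\ref{prop:indistinguishable}, extends this to every query $w=p^kw_0$. So each query answers $p^{1-k}$ under $N_0$ and under all $N_x$ except at most one. Consequently, against a deterministic algorithm making $T<K'$ queries, answering every query with $p^{1-k}$ is consistent both with $N_0$ (answer ``orthogonal'') and with some surviving $N_x$ (answer ``not orthogonal''). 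Hence $T\geq K'=\Omega(p^{n-1})$.

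For randomised algorithms we apply Yao's principle to the distribution putting mass $1/2$ on $N_0$ and $1/2$ uniformly on the $K'$ norms $N_x$. On the all-$p^{1-k}$ transcript a deterministic algorithm gives a single answer. It is therefore correct on $N_0$ and on at most $T$ of the $N_x$, or else on all surviving $N_x$ but not on $N_0$. Its success probability is thus at most $\frac12+\frac{T}{2K'}$, and requiring this to be $\geq 2/3$ forces $T\geq K'/3$.

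For general $l_{\max}=l<n$, we embed this construction in the first $l$ coordinates. The norm is $\max(N'(z_1,\dots,z_l),c_{l+1}|z_{l+1}|_p,\dots)$ with pairwise incommensurable fixed $c_j$ that are incommensurable with $1$, and $N'\in\{N_0\}\cup\NPV$ on $\Q_p^l$. The basis is the canonical one. By Lemma~\ref{lem:blocks}, orthogonality reduces to that of the first block, whose size is $l_{\max}=l$. A query $z$ reveals information about $N'$ only through $N'(z_1,\dots,z_l)$, and only when that term attains the maximum; in every case each query still eliminates at most one $x$. The same count gives $\Omega(p^{l-1})$.

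The main obstacle is this reduction from queries in $\Q_p^n$ to queries on the block. We handle it by noting that the outer maximum is a fixed known function of $N'(z_1,\dots,z_l)$ and of the other coordinates. Therefore the answer depends on $x$ only through whether $\overline{(z_1,\dots,z_l)_0}\in x$, which is a single line test.
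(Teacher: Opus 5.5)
Your proof is correct and follows essentially the same route as the paper: it uses the same pair of hypotheses $N_0 = p\Vert\cdot\Vert_\infty$ versus $N_x$ with $x$ avoiding the $n$ basis lines, the same count $K' = \frac{p^n-1}{p-1}-n$, the same Yao distribution (mass $1/2$ on $N_0$, $1/2$ uniform on the $N_x$) giving $T \geq K'/3$, and the same embedding into the largest block completed by incommensurable directions. Your justification of the embedding step, namely that each query still tests only a single line of the block, is slightly more explicit than the paper's.
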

\begin{proof}
Let us place ourselves in the worst case $l_{\max} = n$ and take for
$\mathcal{B} = (e_1,\dots,e_n)$ the canonical $\Z_p$-basis of $\Z_p^n$.
Let
$N_0(b) = p \max_{1 \leq i \leq n} \vert \alpha_i\vert_p$, for $b = \sum_{i=1}^n \alpha_i e_i.$
Then $N_0(w)=p$ for any primitive vector $w \in \Z_p^n$
and thus, thanks to Lemma \ref{lem:descent-test},
$\mathcal B$ is $N_0$-orthogonal.

Now, let us assume that there is a deterministic algorithm
that can decide orthogonality of bases of $V$
in at most $T < K_{LBD}:= \frac{p^n-1}{p-1}-n$ norm oracle queries.
We apply it to $\mathcal{B}$ and $N_0.$ It answers ``True''
and call the norm oracle for the vectors
$v_1,\dots, v_{T_0} \in V$
for some $T_0 \leq T < K$.
For any $i,$ let us write $v_i = p^{k_i} \widetilde{v_i}$ with $\widetilde{v_i}$
primitive.
Let  $x \in P(\F_p^n)$ be distinct from the residual lines $Vect( \overline{\widetilde{v_i}})$ and $Vect(\overline{e_i}) \subset \F_p^n$
and let $N_x \in \NPV$ be the corresponding norm.
Thanks to Lemma \ref{lem:computing-Nx},
$N_0(\widetilde{v_i}) = p$
and $N_x(\widetilde{v_i}) = p$,
thus
$N_0(v_i) = N_x(v_i).$
However, $B$ is not $N_x$ orthogonal
as there is a linear combination with coefficients
in $\llbracket 0, p-1 \rrbracket$
corresponding to $x$ making the norm drop.
The algorithm being deterministic and accessing norms only through
a norm oracle,
if it is given $B$ and $N_x$,
it should make the exact same operation and the same $T_0$ queries
as for $B$ and $N_0.$
As such, the exact same execution can not
provide two distinct answers (``True'' and ``False''), contradicting the hypotheses.

For the randomised case, take the distribution that picks $N_0$ with
probability $1/2$, and otherwise a uniform $N_x$ among the $K_{LBD}$
admissible ones: a deterministic algorithm making $T$ queries then errs
with probability at least $\frac{1}{2}\min\big(1, \frac{K_{LBD}-T}{K_{LBD}}\big)$,
so an error probability $< 1/3$ forces $T > K_{LBD}/3$, and Yao's principle
concludes. Finally, running the same construction inside the largest
commensurability block, of size $l_{\max}$, and completing with
pairwise incommensurable directions (which are orthogonal to everything
by Proposition~\ref{prop:incomm-ortho} and cost nothing to handle),
gives $\Omega(p^{l_{\max}-1})$.
\end{proof}

The matching algorithm is written as Algorithm~\ref{algo:decide-oracle}.

\begin{algorithm}[t]
\caption{Residual commensurability with oracle}
\label{algo:decide-oracle}
\KwIn{a family $B$ of $m \geq 2$ vectors $b_j$ of size $n$; a norm
  oracle for $N$}
\KwOut{true if $B$ is $N$-orthogonal, false otherwise}
compute the $k \leq m$ commensurability classes with
  Algorithm~\ref{algo:classes}\;
\lIf{$\card \mathit{Classes} = m$}{\Return true
  \tcp*[f]{pairwise incommensurable}}
\ForEach{class $B_i$ of $l_i \geq 2$ vectors, of common norm $\rho_i$}{
    \ForEach{combination $v$ of $B_i$ with coefficients mod $p$, up to
      projective equivalence}{
      \lIf{$N(v) < \rho_i$}{\Return false}
    }
}
\Return true\;
\end{algorithm}

Inside each block of $l_i$ commensurable
vectors of common norm $\rho_i$, it enumerates the
$\frac{p^{l_i}-1}{p-1}$ projective combinations with coefficients
modulo $p$ and returns false as soon as one of them has norm
$< \rho_i$. Its complexity is then
\[ O\left(m + \sum_{i} \frac{p^{l_i}-1}{p-1}\right). \]
In the worst case, the largest block alone contributes
$\Theta\left(\frac{p^{l_{\max}}-1}{p-1}\right) = \Theta(p^{l_{\max} - 1})$ oracle queries. This is precisely the contribution captured by our lower bound in Theorem~\ref{thm:lowerbound-decide}. When there is a single block, this gives a matching bound \[ \Theta(p^{l_{\max} - 1}). \]

\section{Conclusion and further work}
\label{sec:further}

The hardness of the orthogonalisation and decision problems comes above
all from the representation of the norm and from commensurability: when
all the vectors are pairwise incommensurable everything becomes
polynomial, whereas the complexity is exponential in $\nu_{\max}$
(resp.\ $l_{\max}$) otherwise. Moreover, the proofs of
Theorems~\ref{thm:lowerbound-ortho} and~\ref{thm:lowerbound-decide}
show that these problems amount to identifying a hidden subgroup ---
here a hidden line of $\F_p^n$ --- through membership queries, a
setting whose hardness is well understood~\cite{dutto2025survey}: it
does admit Grover-type quantum speed-ups, in $p^{(n-1)/2}$, but nothing
better is known to date and the gain remains marginal. To escape the
exponential regime one has to change the oracle model; an \emph{argmax}
oracle returning both the value of the norm and an index $r$ achieving
the maximum, for instance, would give information on the linear forms
directly, with no exhaustive search.

We therefore leave the following question open: what is the hardness of
these problems in the intermediate case where the norm is given neither
by a diagonal form (the easy case) nor by a black-box oracle (the hard
case), but by a public representation designed to hide a diagonal form?
In other words, is it possible to build a representation of a norm $N$
that reveals no $N$-orthogonal basis? This opens the way to new
asymmetric cryptographic constructions.

\printbibliography

\end{document}